\newtheorem{theorem}{Theorem}
\newtheorem{proposition}[theorem]{Proposition}
\newtheorem{lemma}[theorem]{Lemma}
\newtheorem{corollary}[theorem]{Corollary}
\newtheorem{example}[theorem]{Example}
\newtheorem{remark}[theorem]{Remark}
\newtheorem{assumption}[theorem]{Assumption}
\tikzset{%
	> = stealth, 
	shorten > = 1pt, 
	auto,
	node distance = 3cm, 
	every edge/.append style = {thick}, 
}
\tikzstyle{node}=
\definecolor{orange}{RGB}{255,127,0}
\newcommand{\boldall}[1]{\ifmmode\mathbf{#1}\else\textbf{\boldmath{#1}}\fi}
\title{Fractionally Subadditive Maximization
\\under an Incremental Knapsack Constraint\\ with Applications to Incremental Flows
\thanks{We acknowledge funding by the DFG through a short time project and subproject A007 of the CRC/TRR 154, as well as through grant DI 2041/2.}}
\author[1]{Yann Disser}
\author[2]{Max Klimm}
\author[1]{Annette Lutz}
\author[1]{David Weckbecker}
\affil[1]{TU Darmstadt, Germany, \tt \{disser|lutz|weckbecker\}@mathematik.tu-darmstadt.de}
\affil[2]{TU Berlin, Germany, \tt klimm@math.tu-berlin.de}
\date{}
\begin{document}
\maketitle

\normalem

\global\long\def\N{\mathbb{N}}%

\global\long\def\Q{\mathbb{Q}}%

\global\long\def\R{\mathbb{R}}%

\global\long\def\algscale{\textsc{Alg}_{\mathrm{scale}}}%

\global\long\def\optsol{S^{*}}%

\global\long\def\algsol{S^{\textrm{A}}}%

\global\long\def\algperm{\pi^{\textrm{A}}}%

\global\long\def\k{C}%

\global\long\def\gset{E}%

\global\long\def\base{b}%

\begin{abstract}
We consider the problem of maximizing a fractionally subadditive function
under an increasing knapsack constraint.
An incremental solution
to this problem is given by an order in which to include the elements
of the ground set, and the competitive ratio of an incremental solution
is defined by the worst ratio over all capacities relative to an optimum
solution of the corresponding capacity. We present an algorithm that
finds an incremental solution of competitive ratio at most $\max\{3.293\sqrt{M},2M\}$,
under the assumption that the values of singleton sets are in the
range $[1,M]$, and we give a lower bound of $\max\{2.618,M\}$ on
the attainable competitive ratio. In addition, we establish that our
framework captures potential-based flows between two vertices, and
we give a lower bound of~$\max\{2,M\}$ and an upper bound
of $2M$ for the incremental maximization of classical flows with
capacities in $[1,M]$ which is tight for the unit capacity case.
\end{abstract}

\section{Introduction}
\label{sec:introduction}

The decisions involved in large-scale infrastructure projects or in 
scheduling expensive investments usually have an impact over a prolonged period of time.
This paper examines the question how investment or construction decisions can be made over time as the total budget grows, in such a way that the resulting solution is good at every point in time.
This is a natural question for long-term infrastructure projects such as the construction of road networks, public transport systems, and energy networks. In addition, it is relevant for the investment decisions of businesses in manufacturing or distribution infrastructure. 

Mathematically, we model the above settings in terms of the \emph{incremental optimization problem}.
Formally, we are given a ground set $E$ of \emph{elements} that can be invested in. Each element $e \in E$ has a weight $w(e)$ that models the time or money that has to be spent on realizing the element. In the following, for a set $S \subseteq E$, we write $w(S) := \sum_{e \in S} w(e)$.
The value of having realized a subset $S \subseteq E$ of elements is given by a monotone objective function $f\colon 2^E \to \mathbb{R}_{\geq 0}$.
Given a capacity bound $C \in \mathbb{R}_{\geq 0}$, the maximum value that can be obtained with elements up to this total size is given by an optimum solution to the following mathematical optimization problem:
\begin{equation}
f^*(C) := \max \bigl\{f(S) \;\big\vert\; S\subseteq \gset, w(S)\leq C\bigr\}.\label{eq:generic-opt}\tag{P}
\end{equation}
Given $C \in \mathbb{R}_{\geq 0}$, we denote the optimal value of this optimization problem by $f^*(C)$ and a set $S \subseteq E$ for which the optimum is attained by $\optsol(C)$, where for the later we break ties in an arbitrary but fixed manner in order to obtain a unique set $\optsol(C)$.

We are interested in obtaining an \emph{incremental} solution to the optimization problem \eqref{eq:generic-opt} that yields a good value for \emph{all} capacity bounds $C \in \mathbb{R}_{\geq 0}$. Formally, an \emph{(incremental) solution} is an ordering $\pi = \bigl(e_{\pi(1)},e_{\pi(2)},\dots,e_{\pi(m)}\bigr)$ of the elements of the ground set $E$ with $m = |E|$.
For capacity $C \in \mathbb{R}_{\geq 0}$, let $\pi(C)$ be the items
contained in the maximal prefix of~$\pi$ that fits into the capacity~$C$,
i.e., $$\pi(C)=\left\{ e_{\pi(1)},e_{\pi(2)},\dots,e_{\pi(k)}\right\}$$
for some $k\in\N$ such that $\sum_{i=1}^{k}w(e_{\pi(i)})\leq C$
and either $k=m$ or $\sum_{i=1}^{k+1}w(e_{\pi(i)})>C$. We say that
the incremental solution $\pi$ is \emph{$\rho$-competitive} for some $\rho \geq 1$
if 
\[
f^{*}(C)\leq\rho\, f(\pi(C))\quad\text{for all }C \in \mathbb{R}_{\geq 0}.
\]
We call $\pi$ \emph{competitive} if it is $\rho$-competitive
for some constant $\rho\geq1$. 

As an example, let us consider the special case of the \emph{incremental maximum flow problem}. 
Here, the ground set $E$ corresponds to the set of edges of an undirected graph $G = (V,E)$ with two designated vertices $s,t \in V$. 
Each edge has a weight $w(e) \in \mathbb{R}_{\geq 0}$ and a capacity $u(e) \in \mathbb{R}_{\geq 0}$. 
The value $f(S)$ of a subset $S \subseteq E$ is defined as the maximum value of an $s$-$t$-flow in $G_S = (V,S)$.
Even in this special case, a competitive solution may fail to exist. For illustration, consider the graph in \Cref{subfig:inc-flow-1}. Every  solution~$\pi$ has to put edge~$a$ first in order to be competitive for $C=1$. On the other hand, every solution that puts edge~$a$ first is not better than $k$-competitive for $C=2$. Likewise, for the graph in \Cref{subfig:inc-flow-2}, every competitive solution has to put edge~$a$ first in order to be competitive for $C=1$, and every solution that puts edge~$a$ first is not better than $k$-competitive for $C=k$.

\begin{figure}
\centering
\subfloat[\label{subfig:inc-flow-1}]{
\begin{tikzpicture}[xscale=1.5,yscale=1]
\useasboundingbox (-.5,-1.7) rectangle (2.5,2);
\node[node,label=left:{$s$}] (s) at (0,1) {};
\node[node,label=right:{$t$}] (t) at (2,1) {};
\node[node] (v) at (1,-1) {};
\draw[thick,->] (s) to node[above] {$u(a)=1$} (t);
\draw[thick,->] (s) to node[below left] {$u(b)=k$}  (v);
\draw[thick,->] (v) to node [below right] {$u(c) = k$} (t);
\end{tikzpicture}
}
\hspace{1cm}
\subfloat[\label{subfig:inc-flow-2}]{
\begin{tikzpicture}
\useasboundingbox (-.5,-0.7) rectangle (4,2.2);
\node[node,label=left:{$s$}] (s) at (0,1) {};
\node[node,label=right:{$t$}] (t) at (3.5,1) {};
\draw[thick,->] (s) to[bend left] node[above] {$u(a) = w(a) = 1$} (t);
\draw[thick,->] (s) to[bend right] node[below] {$u(b) = w(b) = k$} (t);
\end{tikzpicture}
}
\caption{Two examples of the incremental flow problem where no $\rho$-competitive solution with $\rho < k$ exists.}	
\end{figure}

A closer inspection of these two examples reveals that there are (at least) two effects that prevent the existence of competitive solutions. The first is the \emph{complementarity} of elements. In the graph in \Cref{subfig:inc-flow-1}, edges~$b$ and $c$ are complementary in the sense that both edges together support a flow of $k$ while a single one of these edges alone cannot support any flow. For the graph in \Cref{subfig:inc-flow-2}, no two edges are complementary since the total flow supported by a subset of edges is here simply equal to the sum of the capacities of the edges. In this example, the non-existence of a competitive solution is caused by the fact that the edges are too heterogenous. More specifically, we have $f(\{a\}) = 1$, but $f(\{b\}) = k$, i.e., there are two singleton sets whose value differs by a factor of $k$.

As we will show, these are essentially the only two effects that prevent the existence of competitive solutions. More specifically, we will make two assumptions that exclude the two effects shown in \Cref{subfig:inc-flow-1} and \Cref{subfig:inc-flow-2}. 
First, to avoid complementarities between elements, we assume that $f$ is \emph{fractionally subadditive}.
Formally, a function $f\colon2^{\gset}\to\R_{\geq0}$ is called \emph{fractionally
subadditive} if
\begin{align*}
f(A)\leq\sum_{i=1}^{k}\alpha_{i}f(B_{i}) &\quad\text{for all } A,B_{1},\dots,B_{k}\in2^{\gset} \text{ and all } \alpha_{1},\dots,\alpha_{k}\in\R_{\geq 0}\\ &\quad\quad\text{
such that } \sum_{i\in\left\{ 1,\dots,k\right\} :e\in B_{i}}\alpha_{i}\geq 1 \text{ for all } e\in A.
\end{align*}
Observe that fractional subadditivity implies regular
subadditivity, i.e., $f(A\cup B)\leq f(A)+f(B)$, but not vice-versa.
Second, to avoid that there exist singleton sets that differ too much in their values, we assume that there is a constant $M \in \R_{\geq 0}$, $M \geq 1$ such that  
 $f(\{e\})\in [1,M]$ for all $e\in\gset$.
We call such valuations \emph{$M$-bounded}.

Summarizing the discussion, this paper considers incremental solutions to~(\ref{eq:generic-opt}) under the following assumptions.
\begin{assumption}
The function $f : 2^E \to \mathbb{R}_{\geq 0}$ has the following properties:
\begin{align}
&\text{$f$ is monotone, i.e., $f(A)\geq f(B)$ for $A\supseteq B$},\label{eq:MO}\tag{MO}\\
&\text{$f$ is $M$-bounded, i.e., $f(e)\in[1,M]$ for all $e\in E$},\label{eq:MB}\tag{MB}\\
&\text{$f$ is fractionally subadditive}.\label{eq:FS}\tag{FS}
\end{align}
\end{assumption}
Before stating our results, we illustrate the applicability of our
framework to different settings.
\begin{example}[Submodular objective]
A function $f : 2^E \to \mathbb{R}_{\geq 0}$ is called \emph{submodular} if, for all $A,B \in 2^E$, $f(A \cap B) + f(A \cup B) \leq f(A) + f(B)$.
It was shown by Lehmann et al. \cite{LehmannLehmannNisan/06} that every monotone submodular function is also fractionally subadditive.

As a consequence our framework captures, e.g., the \emph{maximum coverage
problem}, where we are given a weighted set of sets $E\subseteq2^{U}$
over a universe~$U$. Every element of $U$ has a value $v\colon U\to\mathbb{R}_{\geq0}$
associated with it, and $f(S)=v\left(\bigcup_{X\in S}X\right)$ for
all $S\subseteq E$ where we write $v(X):=\sum_{x\in X}v(x)$ for
a set $X\in2^{U}$. In this context, the $M$-boundedness condition
demands that $v(X)\in[1,M]$ for all $X\in E$. Further examples include
maximization versions of clustering and location problems.
\end{example}

\begin{example}[XOS objective]
\label{exa:XOS}An objective function $f\colon E\to\R$ is called
\emph{XOS} if it can be written as the pointwise maximum of modular
functions, i.e., there are $k\in\N$ and values $v_{e,i}\in\R$ for
all $e\in E$ and $i\in\left\{ 1,\dots,k\right\} $ such that
\begin{align*}
f(S)=\max \Biggl\{\sum_{e\in S}v_{e,i} \;\Bigg\vert\; i \in \{1,\dots,k\}\Biggr\}
\quad \text {for all $S\subseteq E$}.
\end{align*}
As shown by Feige~\cite{Feige/09}, the set of fractionally subadditive functions
and the set of XOS functions coincide. XOS
functions are a popular way to encode the valuations of buyers in
combinatorial auctions since they often give rise to a succinct representation
(cf.~Nisan~\cite{Nisan/00} and Lehman et al.~\cite{LehmannLehmannNisan/06}).
\end{example}

\begin{example}[Weighted rank function of an independence system]
An independence system is a tuple~$(E,\mathcal{I})$, where $\emptyset\in\mathcal{I}$ and $\mathcal{I\subseteq}2^{E}$
is closed under taking subsets, i.e., $A \in \mathcal{I}$ whenever $A \subseteq B$ and $B \in \mathcal{I}$. For
a given weight function $v \colon E\to\mathbb{R}_{\geq0}$, the weighted
rank function of $(E,\mathcal{I})$ is given by
\begin{align*}
f(S)=\max \Biggl\{\sum_{e\in I}v(e) \; \Bigg\vert\; I \in \mathcal{I}\cap2^{S}\Biggr\} \quad \text{ for all $S\subseteq E$}.
\end{align*}
As shown by Amanatidis et al.~\cite{AmanatidisBirmpasMarkakis/16}, the weighted rank function of an independence system is fractionally
subadditive.

This setting captures well-known problems such as \emph{weighted $d$-dimensional
matching} for any $d\in\N$. Here, we are given sets $V_1,\dots,V_d$ such that $E \subseteq V_1 \times \cdots \times V_d$ and a function $v : E \to \mathbb{R}_{\geq 0}$. The value $f(S)$ is the defined as the maximum weight of a $d$-dimensional matching in $S$, i.e,
\begin{align*}
f(S) = \max \Biggl\{ \sum_{e \in M} v(e) \;\Bigg\vert\; &M \subseteq S \text{ such that } v_i \neq v_i' \text{ for all $i \in \{1,\dots,d\}$} \\[-8pt]
&\quad\text{ for all } e \!=\! (v_1,\dots,v_d), e' \!=\! (v_1',\dots,v_d') \in M \text{ with } e \neq e' \Biggr\}	.
\end{align*}

In a similar vein, this setting also includes weighted set packing and weighted maximum independent set.
\end{example}

\begin{example}[Potential-based flows]
\label{exa:supply}
Consider a variant of the incremental flow problem on parallel edges as in \Cref{subfig:inc-flow-2}. 
As before, every edge~$e$ has an capacity $u(e) \in \mathbb{R}_{\geq 0}$. In addition, we are given a continuous and strictly increasing potential-loss function $\psi : \mathbb{R} \to \mathbb{R}$ with $\lim_{x \to \infty} 
\psi(x) = \infty$ that describes the physical properties of the network. Every edge~$e$ has a \emph{resistance} $\beta(e) \in \mathbb{R}_{\geq 0}$.
A vector $x \in \mathbb{R}_{\geq 0}^E$ is a flow if $x_e \leq u(e)$ for all $e \in E$, and it is called a potential-based flow if there are vertex potentials $p_{s}, p_t \in \mathbb{R}_{\geq 0}$ such that
\begin{align*}
p_{s}-p_{t}=\beta(e)\psi(f(e)) \quad\text{ for all $e\in E$}.
\end{align*}
The potentials correspond to physical properties at the nodes such as pressures or voltages; different choices of $\psi$ allow to
model gas flows, water flows, and electrical flows, see Gro\ss\ et al.~\cite{GrossPSSS/19}.
In our incremental framework, $w\colon\gset\to\mathbb{R}_{\geq0}$
are interpreted as construction costs of pipes or cables and the objective
is to maximize the flow from~$s$ to~$t$ in terms of the objective
\begin{equation*}
f(S)=\max \Biggl\{\sum_{e\in T} \psi^{-1}\biggl(\frac{p}{\beta(e)}\biggr)\,\Bigg\vert\, {T \subseteq S}, p \in \mathbb{R}_{\geq 0} \:\text{with}\: \psi^{-1}\biggl(\frac{p}{\beta(e)}\biggr)\!\!\leq\! u(e)\:\text{for all}\,e\in T\Biggl\},\label{eq:pb_flow}
\end{equation*}
where $p:=p_{s}-p_{t}$ and $u(e)$. The value
of the objective is the maximum value of a feasible potential-based
$s$-$t$-flow where we allow turning off edges in $S\setminus T$ in order to make $f$ monotone. The $M$-boundedness
condition corresponds to the assumption that $u(e)\in[1,M]$.
As we will show in \Cref{prop:pot-based_max-flow_is_fractionally-subadditive}, this objective is fractionally
subadditive.
 
\end{example}

\subsection{Our results}

Our main results are bounds on the best possible competitive ratio
for incremental solutions to~(\ref{eq:generic-opt}) for objectives
satisfying \eqref{eq:MO}, \eqref{eq:MB}, and \eqref{eq:FS}. In other words, we bound the loss
in solution quality that we have to accept when asking for incremental
solutions that optimize for all capacities simultaneously. Note that,
as customary in online optimization, we do not impose restrictions
on the computational complexity of finding incremental solutions.
To state our result, we denote by $\varphi=\frac{1}{2}(1+\sqrt{5}) \approx 1.618$ the golden ratio.
\begin{theorem}
\label{thm:upper_and_lower_bound_for_M-bounded}For monotone, $M$-bounded,
and fractionally subadditive objectives, the best-possible~$\rho$\linebreak
for which the optimization problem~\eqref{eq:generic-opt} admits a $\rho$-compet\-itive
solution satisfies\linebreak $\rho\in\bigl[\max\bigl\{\varphi+1,M\bigr\},\max\bigl\{3.293\sqrt{M},2M\bigr\}\bigr].$
\end{theorem}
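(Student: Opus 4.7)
I would prove the theorem as two separate statements: an algorithmic construction for the upper bound $\max\{3.293\sqrt{M},2M\}$ and an adversarial instance for the lower bound $\max\{\sqrt{6},M\}$.

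For the upper bound, my plan is to design a scaling algorithm $\algscale$ that orders elements in phases indexed by a sequence of value thresholds $\Lambda_0 < \Lambda_1 < \ldots$ spaced geometrically with some ratio $\gamma > 1$ to be tuned. In phase $i$, the algorithm precomputes a minimum-weight set $S_i \subseteq \gset$ with $f(S_i) \geq \Lambda_i$ (which is well-defined since we impose no computational restrictions) and appends the elements of $S_i \setminus \bigcup_{j<i} S_j$ to the current ordering. The analysis hinges on a weight-scaling lemma: fixing any capacity $C$ with optimum value $\Lambda = f^*(C)$ and invoking the XOS representation from Example~\ref{exa:XOS}, one picks a clause $v^*$ attaining $\sum_{e \in \optsol_C} v^*_e = \Lambda$, sorts $\optsol_C$ in increasing order of $w(e)/v^*_e$, and extracts the shortest prefix whose clause-value exceeds $\Lambda_i$. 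A standard averaging argument then yields $w(S_i) \leq C(\Lambda_i + M)/\Lambda$, the additive $M$ term stemming from the $M$-boundedness $v^*_e \leq f(\{e\}) \leq M$ of the overshoot element. Summing this bound as a geometric series over all phases that fit within budget $C$ produces a ratio of the form $\gamma/(\gamma-1) + O(M \log_\gamma \Lambda / \Lambda)$, and balancing the two summands by tuning $\gamma$ gives the $3.293\sqrt{M}$ constant. The $2M$ term handles the edge case of very small capacities, where $\algscale$ achieves value at least $1$ as soon as one element fits, whereas $f^*(C) \leq M$.

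For the lower bound, I would construct two hard instances. The $M$-part is witnessed by a two-item instance with $a$ of weight $1$ and $f(\{a\}) = 1$, and $b$ of weight $2$ and $f(\{b\}) = M$: any competitive ordering must begin with $a$ to have positive value at $C = 1$, hence incurs ratio $M$ at $C = 2$. The $\sqrt{6}$-part is more delicate and requires genuine fractionally subadditive complementarities: I would build an XOS instance with three ``bundles'' of items whose XOS clauses and weights are calibrated so that three distinct critical capacities $C_1 < C_2 < C_3$ each admit a different bundle as their unique optimum. Any ordering that commits to one bundle first loses at the other two capacities, and balancing the three losses reduces to solving a quadratic equation whose positive root is $\sqrt{6}$.

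The main obstacle, as I see it, is the weight-scaling lemma itself. Using fractional subadditivity naively, without recourse to the XOS form, would force an $M$-factor loss per phase (one bounds each minimum-weight $S_i$ independently using singleton values) and cannot yield anything better than $O(M \log M)$. The crucial insight that enables the $\sqrt{M}$-type bound is that one may fix a \emph{single} XOS clause attaining $f(\optsol_C)$ and reduce the whole multi-phase analysis to a fractional-knapsack averaging within that one clause, so that the $M$-overshoot enters additively once per phase rather than multiplicatively. Once this lemma is in place, selecting the correct $\gamma$ and performing the case distinction between small and large capacities to extract the exact constants $3.293$ and $2$ is a routine optimization.
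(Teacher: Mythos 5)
Your overall architecture matches the paper's: a phase-based scaling algorithm whose per-phase assembly order is governed by a single XOS clause (equivalently, an optimal dual solution of the LP certifying fractional subadditivity), with the additive $M$ overshoot entering once per phase; and for the lower bound, the two-item modular instance for the $M$ term plus a three-group XOS instance calibrated so that the forced commitments at successive critical capacities yield $\sqrt{6}\approx 2.449$. The key insight you flag --- fixing one clause attaining $f(\optsol_C)$ and doing fractional-knapsack averaging within it --- is exactly the paper's Lemma~\ref{lem:estimate_value-of-subset}.

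However, there are two concrete gaps in the upper bound. First, you scale only values, whereas the paper scales capacities \emph{and} values simultaneously (thresholds $\k_i=\min\{\k\geq\delta\k_{i-1}\mid f(\optsol_\k)\geq\rho f(\optsol_{\k_{i-1}})\}$). With pure value scaling you must still handle capacities $C$ at which the algorithm is mid-assembly of $S_{k'+1}$ but $C$ already exceeds $w(S_{k'+1})$ because of the accumulated weight $\sum_{j\leq k'}w(S_j)$ of earlier phases; there $f^*(C)$ can exceed $\Lambda_{k'+1}$ and you must credit the density-ordered \emph{prefix} of the set currently being assembled (this is the role of $\optsol_{\k_i,C}$ and the final case of the paper's Theorem~\ref{thm:upper_bound_M}), not merely use the lemma to bound $w(S_i)$. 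Your sketch invokes the clause ordering only to define $S_i$, not to value partial progress, so this case is unaddressed. Second, the claimed ratio $\gamma/(\gamma-1)+O(M\log_\gamma\Lambda/\Lambda)$ does not have the right shape to produce a $\lambda\sqrt{M}$ bound by ``balancing'': the $\sqrt{M}$ arises from a self-consistency condition --- after two phases of value growth by factor $\rho=\lambda\sqrt{M}$ one has $f^*(\k_i)\geq\rho^{i-1}\geq\lambda^2M$, so the per-phase additive loss $M/f^*(\k_{i-1})$ becomes the $M$-independent constant $1/\lambda^2$ --- and the constant $3.2924$ is the root of a degree-seven polynomial extracted from a case analysis of the first few phases (where $\rho^{i-1}$ has not yet swamped $M$), together with the coupling $\delta=\lambda^3/(\lambda^2+1)$ between the capacity and value scaling factors. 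That part is not a routine optimization over a single parameter $\gamma$. The lower-bound half of your proposal is essentially the paper's argument and is fine as a plan, modulo actually exhibiting the weights and clauses.
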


In particular, for $M\geq2.71$, the best possible competitive ratio
is between~$M$ and~$2M$, while the bounds for $1$-bounded objectives
simplify as follows.
\begin{corollary}
\label{thm:upper_and_lower_bound_for_1-bounded}For monotone, $1$-bounded,
and fractionally subadditive objectives, the best~$\rho$ for which
the optimization problem~\eqref{eq:generic-opt} admits a $\rho$-competitive solution satisfies $\rho\in \bigl[\varphi+1,3.293\bigr]$.
\end{corollary}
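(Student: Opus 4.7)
The plan is to obtain the corollary as an immediate specialization of Theorem~\ref{thm:upper_and_lower_bound_for_M-bounded} to the boundary case $M=1$. The class of $1$-bounded valuations is precisely the instance $M=1$ of the family of $M$-bounded valuations considered in that theorem, while the assumptions (MO) and (FS) are identical in both statements. Consequently, no new construction is needed: I would simply invoke Theorem~\ref{thm:upper_and_lower_bound_for_M-bounded} and plug in $M=1$.

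It then remains to evaluate the two expressions at $M=1$. For the lower bound, $\max\{2.449,\,M\}$ becomes $\max\{2.449,\,1\}=2.449$, since $2.449>1$. For the upper bound, $\max\{3.293\sqrt{M},\,2M\}$ becomes $\max\{3.293,\,2\}=3.293$, since $3.293>2$. Combining these yields $\rho\in[2.449,\,3.293]$, as claimed. There is no substantive obstacle: the entire content of the corollary is the observation that at $M=1$ the ``$M$-dependent'' term in each of the two maxima is dominated by the respective absolute constant, so the specialised bounds reduce to the two numerical values stated.
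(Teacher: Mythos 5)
Your proposal is correct and matches the paper exactly: the corollary is stated as an immediate specialization of Theorem~\ref{thm:upper_and_lower_bound_for_M-bounded} to $M=1$, with the two maxima evaluating to $2.449$ and $3.293$ respectively. No further argument is needed.
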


Our upper bounds are shown by an algorithm that uses a simultaneous capacity-
and value-scaling approach. In each phase, we increase our capacity
and value thresholds and pick the smallest capacity for which the
optimum solution exceeds our thresholds. This solution is then assembled
by adding one element at a time in a specific order. The order is
chosen based on a primal-dual LP formulation that relies on fractional
subadditivity.

For the definition of the algorithm, we need access to two oracles. On the
one hand, we need oracle access to the optimal solution of a given
capacity; on the other hand, we need access to an XOS oracle. More
information on this can be found in Remark~\ref{rem:oracle_access}.

In Section~\ref{sec:ub}, we describe our algorithmic approach in
detail and give a proof of the upper bound. In Section~\ref{sec:lb},
we complement our result with two lower bounds. As an additional motivation,
in Section~\ref{sec:flows}, we show that our framework captures
potential-based flows as described in Example~\ref{exa:supply}.
In this context, a $1$-bounded objective corresponds to unit capacities.
As a contrast, we also show that classical $s$-$t$-flows with capacities
in $[1,M]$ admit $2M$-competitive incremental solutions, and this
is best-possible for the unit capacity case.

\subsection{Related Work}

Bernstein et al.~\cite{BernsteinDisserGrossHimburg/20} considered
a closely related framework for incremental maximization. Their framework
assumes a growing cardinality constraint, which is a special case
of our problem in~(\ref{eq:generic-opt}) when all elements $e\in E$
have unit weight $w(e)=1$. A natural incremental approach for a growing
cardinality constraint is the greedy algorithm that includes in each
step the element that increases the objective the most. This algorithm
is well known to yield a $e/(e-1)$ approximation for submodular objectives~\cite{NemhauserWolseyFisher/78}.
Several generalizations of this result to broader classes of functions
are known. Recently, Disser and Weckbecker~\cite{DisserWeckbecker/20}
unified these results by giving a tight bound for the approximation
ratio of the greedy algorithm for $\gamma$-$\alpha$-augmentable
functions\footnote{A function is called $\gamma$-$\alpha$-augmentable if, for all sets
$X,Y\subseteq\gset$, there exists $y\in Y$ with $f(X\cup\{y\})-f(X)\geq(\gamma f(X\cup Y)-\alpha f(X))/|Y|.$}, which interpolates between known results for weighted rank functions
of independence systems of bounded rank quotient, functions of bounded
submodularity ratio, and $\alpha$-augmentable functions. Sviridenko~\cite{Sviridenko/04}
showed that for a submodular function under a knapsack constraint,
the greedy algorithm yields a $(1-1/e)$-approximation when combined
with a partial enumeration procedure. This approximation guarantee
is best possible as shown by Feige~\cite{Feige98}. Yoshida~\cite{Yoshida19}
generalized the result of Sviridenko to submodular functions of bounded
curvature.

Another closely related setting is the robust maximization of a modular
function under a knapsack constraint. Here, the capacity of the knapsack
is revealed in an online fashion while packing, and we ask for a packing
order that guarantees a good solution for every capacity. Megow and
Mestre~\cite{MegowM13} considered this setting under the assumption
that we have to stop packing once an item exceeds the knapsack capacity
and presented a polynomial time algorithm that has an instance-sensitive
near-optimal competitive ratio. Navarra and Pinotti~\cite{NavarraP17}
added the mild assumption that all items fit in the knapsack and devised
competitive solutions for this model. Disser et al.~\cite{DisserKMS17}
allowed to discard items that do not fit and showed tight competitive
ratios for this case. Kawase et al.~\cite{KawaseSF19} studied a
generalization of this model in which the objective is submodular
and devised a randomized competitive algorithm for this case.
Klimm and Knaack~\cite{KlimmK22} gave a deterministic competitive algorithm with improved competitive ratio for this case.
Since the models in \cite{DisserKMS17,KawaseSF19,KlimmK22} allow to discard items, these competitive ratios do not
translate to our model. Kobayashi and Takazawa~\cite{KobayashiT17}
studied randomized strategies for cardinality robustness in the knapsack
problem. Other online versions of the knapsack problem assume that
items are revealed over time, e.g., see Matchetti-Spaccamela and Vercellis~\cite{Marchetti-SpaccamelaV95}.
Thielen et al.~\cite{ThielenTW16} combined both settings and assumed
that items appear over time while the capacity grows. An overview
over a selection of the aforementioned results can be found in \Cref{fig:CR_in_literature}.

\begin{figure}
\begin{center}
\begin{tabular}{l l c c c c}
\toprule
	\textbf{objective} \boldmath$f$\hspace{1cm} & \textbf{setting} & \multicolumn{2}{c}{\textbf{lower bound}} & \multicolumn{2}{c}{\textbf{upper bound}} \\
	\midrule
	additive &  & $\infty$ & \cite{MegowM13} &  \\
	additive & with discarding & $2$ & \cite{DisserKMS17} & $2$ & \cite{DisserKMS17} \\
	additive & with discarding, $f=w$ & $\varphi$ & \cite{DisserKMS17} & $\varphi$ & \cite{DisserKMS17} \\
	additive & $C\geq\max w(e)$ & $2$ & \cite{NavarraP17} & $2$ & \cite{NavarraP17} \\
	additive & $C\geq\max w(e)$, $f=w$ & $\varphi$ & \cite{NavarraP17} & $1.756$ & \cite{NavarraP17} \\[6pt]
	submodular & with discarding & 2 & \cite{DisserKMS17} & 2.794 & \cite{KlimmK22}\\[6pt]
	frac.~subadditive & $f(\{e\})=1$ & $\varphi+1$ &  & $3.293$ & \\[6pt]
	accountable & $w(e)=1$ & $ 2.183$ & \cite{BernsteinDisserGrossHimburg/20} & $\varphi+1$ & \cite{BernsteinDisserGrossHimburg/20} \\
	\bottomrule
\end{tabular}
\end{center}

\caption{\label{fig:CR_in_literature}
Competitive ratios for deterministic incremental
maximization of a monotone objective under a knapsack constraint in
different settings where $\varphi=\frac{1}{2}(1+\sqrt{5}) \approx 1.617$ is the golden ratio. Note that additivity implies fractional subadditivity
(XOS), which implies accountability.}
\end{figure}

In terms of incremental minimization, Lin et al.~\cite{LinNRW10}
introduced a general framework, based on a problem-specific augmentation
routine, that subsumes several earlier results. A maximization problem
with growing cardinality constraint that received particular attention
is the so-called robust matching problem introduced by Hassin and
Rubinstein~\cite{HassinR02}. Here, we ask for a weighted matching
such that the heaviest $k$ edges of the matching approximate a maximum
weight matching of cardinality~$k$, for all cardinalities~$k$.
Hassin and Rubinstein~\cite{HassinR02} gave tight bounds on the
deterministic competitive ratio of this problem, and Matuschke et
al.~\cite{MatuschkeSS18} gave bounds on the randomized competitive
ratio. Fujita et al.~\cite{FujitaKM13} and Kakimura et al.~\cite{KakimuraM13}
considered extensions of this problem to independence systems. A similar
variant of the knapsack problem where the $k$ most valuable items
are compared to an optimum solution of cardinality $k$ was studied
by Kakimura et al.~\cite{Kakimura12}.

Incremental optimization has also been considered from an offline
perspective, i.e., without uncertainty in items or capacities. Kalinowski
et al.~\cite{KalinowskiMatsypuraSavelsbergh/15} and Hartline and
Sharp~\cite{HartlineS07} considered incremental flow problems where
the average flow over time needs to be maximized (in contrast to the
worst flow over time). Anari et al.~\cite{AnariHNPST19} and Orlin
et al.~\cite{OrlinSU18} considered general robust submodular maximization
problems.

The class of fractionally subadditive valuations was introduced by Nisan
\cite{Nisan/00} and Lehman et al.~\cite{LehmannLehmannNisan/06}
under the name of XOS-valuations as a compact way to represent the
utilities of bidders in combinatorial auctions. In a combinatorial
auction, a set of elements $\gset$ is auctioned off to a set of $n$
bidders who each have a private utility function $f_{i}\colon2^{\gset}\to\R_{\geq0}$.
In this context, a natural question is to maximize social welfare,
i.e., to partition $\gset$ into sets $E_{1},E_{2},\dots,E_{n}$ with
the objective to maximize $\sum_{i=1}^{n}f_{i}(E_{i})$. Dobzinski
and Schapira~\cite{DobzinskiSchapiry/06} gave a $(1-1/e)$-approximation
for this problem.

\section{Upper bound\label{sec:ub}}

In the following, we fix a ground set $E$, a monotone,  $M$-bounded
and fractionally subadditive objective $f\colon2^{\gset}\to\mathbb{R}_{\geq0}$,
and weights $w\colon\gset\to\mathbb{R}_{\geq0}$. We present a refined
variant of the incremental algorithm introduced in~\cite{BernsteinDisserGrossHimburg/20}.
On a high level, the idea is to consider optimum solutions of increasing
sizes, and to add all elements in these optimum solutions one solution
at a time. By carefully choosing the order in which we add elements
of a single solution, we ensure that elements contributing the most
to the objective are added first. In this way, we can guarantee that
either the solution we have assembled most recently, or the solution
we are currently assembling provides sufficient value to stay competitive.
While the algorithm of~\cite{BernsteinDisserGrossHimburg/20} only
scales the capacity, our algorithm $\algscale$ simultaneously scales
capacities and solution values. In addition, we use a more sophisticated
order in which we assemble solutions, based on a primal-dual LP formulation.
We now describe our approach in detail.

Let $\lambda\approx3.2924$ be the unique real root of the equation
\[
\mbox{\ensuremath{0=\lambda^{7}-2\lambda^{6}-3\lambda^{5}-3\lambda^{4}-3\lambda^{3}-2\lambda^{2}-\lambda-1}},
\]
$\smash{\delta:=\frac{\lambda^{3}}{\lambda^{2}+1}}\approx3.0143$
and
\begin{align}
\label{eq:definition-rho}
\rho:=\max\bigl\{\lambda\sqrt{M},2M \bigr\}.
\end{align}
Algorithm $\algscale$
operates in phases of increasing capacities 
$C_{1},\dots,C_{N}\in\mathbb{R}_{\geq0}$
with
\begin{align*}
\k_{1} & := \min_{e\in E}w(e), \\ 
\k_{i} & :=\min\bigl\{\k\geq\delta\k_{i-1} \;\big\vert\; f^{*}(\k)\geq\rho f^{*}(\k_{i-1}) \bigr\} \quad \text{ for all $i \in \N$},
\end{align*}
where we set $\min\emptyset=\sum_{e\in E}w(e)$. Let
$N\in\mathbb{N}$ be the minimal index such that $C_{N}=\sum_{e\in E}w(e)$.
In phase $i\in\{1,\dots,N\}$, $\algscale$ adds the elements of the
set $\optsol(C_{i})$ one at a time. Recall that $\optsol(C)$ is
the optimum solution to~(\ref{eq:generic-opt}) for capacity~$C$.
We may assume that previously added elements are added again (without
any benefit), since this only hurts the algorithm.

To specify the order in which the elements of $\optsol(\k_{i})$
are added, consider the following linear program ($\mathrm{LP}_{X}$)
parameterized by $X\subseteq\gset$ (cf.~\cite{Feige/09}):
\begin{alignat*}{3}
\min &  & \sum_{B\subseteq\gset}\alpha_{B}f(B)\\
\mathrm{s.t.} &  & \sum_{B\subseteq\gset:e\in B}\alpha_{B} & \geq1, & \quad & \text{ for all } e\in X,\\
 &  & \alpha_{B} & \geq0, &  & \text{ for all } B\subseteq\gset,
\end{alignat*}
and its dual
\begin{alignat*}{3}
\max &  & \sum_{e\in X}\gamma_{e}\\
\mathrm{s.t.} &  & \sum_{e\in B}\gamma_{e} & \leq f(B), & \quad & \text{ for all } B\subseteq\gset,\\
 &  & \gamma_{e} & \geq0, &  & \text{ for all } e\in X.
\end{alignat*}
Fractional subadditivity of $f$ translates to $f(X)\leq\sum_{B\subseteq\gset}\alpha_{B}f(B)$
for all $\alpha\in\R^{2^{\gset}}$ feasible for ($\mathrm{LP_{X}}$).
The solution $\alpha^{*}\in\smash{\R^{2^{\gset}}}$ with $\alpha_{X}^{*}=1$
and $\alpha_{B}^{*}=0$ for $X\neq B\subseteq\gset$ is feasible and
satisfies\linebreak $f(X)=\sum_{B\subseteq\gset}\alpha_{B}^{*}f(B)$. Together
this implies that~$\alpha^{*}$ is an optimum solution to ($\mathrm{LP_{X}}$).
By strong duality, there exists an optimum dual solution $\gamma^{*}(X)\in\R^{\gset}$
with 
\begin{equation}
f(X)=\sum_{e\in X}\gamma_{e}^{*}(X).\label{eq:optimum_dual_solution}
\end{equation}

In phase 1, the algorithm $\algscale$ adds the unique element in
$\optsol(\k_{1})$. In phase 2, $\algscale$ adds an element $e \in\optsol(\k_{2})$
first that maximizes $\gamma_{e}^{*}$ and the other elements in an
arbitrary order. In phase $i\in\{3,4,\dots,N\}$, $\algscale$ adds
the elements of $\optsol(\k_{i})$ in an order $(e_{1},\dots,e_{|\optsol(\k_{i})|})$
such that, for all $j\in\{1,\dots,|\optsol(\k_{i})|-1\}$,
\begin{equation}
\frac{\gamma_{e_{j}}^{*}(\optsol(\k_{i}))}{w(e_{j})}\geq\frac{\gamma_{e_{j+1}}^{*}(\optsol(\k_{i}))}{w(e_{j+1})}.\label{eq:density_decreasing}
\end{equation}
For $C\in[0,\k_{i}]$ and with $j:=\max\{j\in\{1,\dots,|\optsol(\k_{i})|\}\mid w(\{e_{1},\dots,e_{j}\})\leq C\}$,
we denote the prefix
of $\optsol(C_{i})$ of capacity $C$ by $\optsol(\k_{i},C):=\{e_{1},\dots,e_{j}\}$. Furthermore, by~$\algperm$,
we refer to the permutation of $\gset$ that represents the order
in which the algorithm $\algscale$ adds the elements of $\gset$.

Roughly, we show that this algorithm is competitive as follows: In
the first phase $\algscale$ obviously performs optimally. In all
other phases, the solution added in the previous phase is large enough
to be competitive until the solution added currently has a larger
value. From this point until the end of the phase, the current solution
is competitive.
\begin{remark}
\label{rem:oracle_access}In the construction of our algorithm, we
assume to have oracle access to an optimum solution $\optsol(C)$
of a given capacity $C\in\R_{\geq0}$. Finding such a solution may
not be possible in polynomial time. Badanidiyuru et al.~\cite{BadanidiyuruDO12},
give a $(2+\varepsilon)$-approximation algorithm
and show that no polynomial time algorithm can have an approximation
ratio of less than $2$, unless $\mathsf{P}=\mathsf{NP}$. Our algorithm
$\algscale$ can use an $\alpha$-approximation oracle instead of
an oracle for the optimum solution, for a loss of factor $\alpha$
in its competitive ratio. Furthermore, we assume to have access to
an XOS oracle. For a given set $X\subseteq\gset$ and $x\in X$, an
XOS oracle gives the value of $x$ within the set $X$, which corresponds
to the solution of the dual LP mentioned above. Instead of an XOS
oracle, our algorithm can use an $\beta$-approximation oracle for
a loss of factor $\beta$ in its competitive ratio.
\end{remark}

We first show that the dual variables $\gamma_{e}^{*}(X)$ associate
a contribution to the overall objective to each element $x\in X$,
and that this association is consistent under taking subsets of $X$.
\begin{lemma}
\label{lem:subset-value_and_dual-variables}Let $X\subseteq Y\subseteq\gset$.
Then,
\begin{align}
\label{eq:subset-value_and_dual-variables}
f(X)\geq\sum_{e\in X}\gamma_{e}^{*}(Y).
\end{align}
\end{lemma}

\begin{proof}
Since $\gamma^{*}(Y)$ is a feasible solution for the dual of ($\mathrm{LP}_{Y}$),
it is also a feasible solution for the dual of ($\mathrm{LP}_{X}$).
Thus, since $\gamma^{*}(X)$ is an optimum solution of ($\mathrm{LP}_{X}$),
\[
\sum_{e\in X}\gamma_{e}^{*}(Y)\leq\sum_{e\in X}\gamma_{e}^{*}(X)\overset{\eqref{eq:optimum_dual_solution}}{=}f(X).
\]
\end{proof}

The following lemma establishes that the order in which we add the
elements of each optimum solution are decreasing in density, in an
approximate sense.
\begin{lemma}
\label{lem:estimate_value-of-subset}Let $C,C'\in\R_{\geq0}$ with
$C\leq C'\leq w(\gset)$. Then
\begin{align}
\label{eq:estimate_value-of-subset}
f^{*}(C')\leq\frac{C'}{C}\bigl(f(\optsol(C',C))+M\bigr).
\end{align}
\end{lemma}

\begin{proof}
If $\optsol(C')=\optsol(C',C)$, the statement holds trivially.
Suppose that $|\optsol(C')|>|\optsol(C',C)|$. Let $j:=|\optsol(C',C)|$,
and let $\optsol(C')= \bigl\{e_{1},\dots,e_{|\optsol(C')|}\bigr\}$ such that
(\ref{eq:density_decreasing}) holds. Note that, by definition, $\optsol(C',C)=\{e_{1},\dots,e_{j}\}$
and
\begin{equation}
w(\{e_{1},\dots,e_{j}\})\leq C<w(\{e_{1},\dots,e_{j+1}\}).\label{eq:set-weights_and_C}
\end{equation}
We have
\begin{eqnarray*}
f^{*}(C') & \overset{\eqref{eq:optimum_dual_solution}}{=} & \sum_{i=1}^{|\optsol(C')|}w(e_{i})\frac{\gamma_{e_{i}}^{*}(\optsol(C'))}{w(e_{i})}\\
 & \overset{\eqref{eq:density_decreasing}}{\leq} & \Biggl(\sum_{i=1}^{j+1}\gamma_{e_{i}}^{*}(\optsol(C'))\Biggr)+\frac{\sum_{i=1}^{j+1}w(e_{i})}{w(\{e_{1},\dots,e_{j+1}\})}\sum_{i=j+2}^{|\optsol(C')|}w(e_{i})\frac{\gamma_{e_{j+1}}^{*}(\optsol(C'))}{w(e_{j+1})}\\
 & \overset{\eqref{eq:density_decreasing}}{\leq} & \Biggl(\sum_{i=1}^{j+1}\gamma_{e_{i}}^{*}(\optsol(C'))\Biggr)+\frac{\Bigl(\sum_{i=1}^{j+1}\gamma_{e_{i}}^{*}(\optsol_{C'})\Bigr)}{w(\{e_{1},\dots,e_{j+1}\})}\sum_{i=j+2}^{|\optsol(C')|}w(e_{i})\\
 & \overset{\eqref{eq:set-weights_and_C}}{<} & \Biggl(\sum_{i=1}^{j+1}\gamma_{e_{i}}^{*}(\optsol(C'))\Biggr)+\frac{\Bigl(\sum_{i=1}^{j+1}\gamma_{e_{i}}^{*}(\optsol(C'))\Bigr)}{C}(C'-C)\\
 & = & \frac{C'}{C}\Biggl[\Biggl(\sum_{i=1}^{j}\gamma_{e_{i}}^{*}(\optsol(C'))\Biggr)+\gamma_{e_{j+1}}^{*}(\optsol(C'))\Biggr]\\
 & \overset{\eqref{eq:subset-value_and_dual-variables}}{\leq} & \frac{C'}{C}\bigl(f(\{e_{1},\dots,e_{j}\})+f(\{e_{j+1}\})\bigr)\\
 & \leq & \frac{C'}{C}\bigl(f(\optsol(C',C))+M \bigr),
\end{eqnarray*}
completing the proof.
\end{proof}

Since every set $S\subseteq\gset$ with $w(S)\leq C$ satisfies $f(S)\leq f^{*}(C)$,
and since we have $w(\optsol(C',C))\leq C$, we immediately obtain
the following.
\begin{corollary}
\label{cor:estimation_value-of-two-optimum-solutions}Let $C,C'\in\R_{\geq0}$
with $C\leq C'\leq w(\gset)$. Then
\begin{align}
\label{eq:estimation_value-of-two-optimum-solutions}
f^{*}(C')\leq \frac{C'}{C} \bigl(f^{*}(C)+M\bigr).
\end{align}
\end{corollary}

With this, we are now ready to show the upper bound of our main result.
\begin{theorem}
For $\rho=\max \bigl\{\lambda\sqrt{M},2M\bigr\}\approx\max \bigl\{3.2924\sqrt{M},2M\bigr\}$,
the incremental solution computed by $\algscale$ is $\rho$-competitive.\label{thm:upper_bound_M}
\end{theorem}

\begin{proof}
We have to show that, for all sizes $C\in\R_{\geq0}$, we have $f^{*}(C)\leq\rho f(\algperm(C))$.
We will do this by analyzing the different phases of the algorithm.
Observe that, for all $i\in\{2,\dots,N-1\}$, we have
\begin{align}
f^{*}(\k_{i}) & \geq  \rho f^{*}(\k_{i-1}) 
  \geq  \rho^{i-1}f^{*}(\k_{1})  \overset{\textrm{(MB)}}{\geq}  \rho^{i-1} \geq  (\lambda\sqrt{M})^{i-1},\label{eq:value_in _phase_i}
\end{align}
where for the first inequality, we use the definition of the algorithm $\algscale$, and for the last inequality we use the definition of $\rho$ in \eqref{eq:definition-rho}.

In phase~1, we have $C\in(0,\k_{1}]$. Since $\k_{1}$ is the minimum
weight of all elements and we start by adding~$\optsol(\k_{1})$,
i.e., the optimum solution of size~$C_{1}$, the value $\algperm(C)$
is optimal.

Consider phase~2, and suppose $C\in(\k_{1},\k_{2})$. If $\k_{2}>\delta\k_{1}$
holds, then $\k_{2}$ is the smallest value such that $f^{*}(\k_{2})\geq\rho f^{*}(\k_{1})$,
i.e., by monotonicity of $f$, we have
\begin{align*}
f(\algperm(C))\geq f(\algperm(\k_{1}))=f^{*}(\k_{1})>\frac{1}{\rho}f^{*}(C).
\end{align*}
Now assume $\k_{2}\leq\delta\k_{1}$. If $C\in(\k_{1},3\k_{1})$,
i.e., any solution of size $C$ cannot contain more than two elements,
or if $C\in(\k_{1},\k_{2})$ and~$\optsol(\k_{2})$ contains at
most~2 elements, by fractional subadditivity and $M$-boundedness
of $f$, we have $f^{*}(C)\leq|\optsol_{\k_{2}}|M\leq2M$ and thus,
\begin{align*}
f(\algperm(C))\geq f^{*}(\k_{1})\geq1\geq\frac{1}{2M}f^{*}(C)\geq\frac{1}{\rho}f^{*}(C).
\end{align*}
Now suppose $C\in[3\k_{1},\k_{2})$ and that $\optsol(\k_{2})$ contains
at least~3 elements. The prefix $\algperm(\k_{1}+\k_{2})$ contains
all elements from $\optsol(\k_{1}) \cup \optsol(\k_{2})$, the prefix
$\algperm(\k_{2})=\algperm(\k_{1}+\k_{2}-\k_{1})$ contains at least
all but one elements of $\optsol(\k_{2})$, and the prefix $\algperm(\k_{2}-\k_{1})$
contains at least all but~2 elements of $\optsol(\k_{2})$ because
the weight of any element is at least $\k_{1}$. Since 
\begin{align*}
C\geq3\k_{1}>(\delta-1)\k_{1}\geq\k_{2}-\k_{1},
\end{align*}
$\algperm(C)$ contains at least all but~2 elements from $\optsol(\k_{2})$.
Recall that in phase~2, the algorithm adds the element $e\in\optsol(\k_{2})$
that maximizes $\gamma_{e}^{*}$ first. Therefore, and because $|\optsol(\k_{2})|\geq3$,
we have $f(\algperm(C))\geq\frac{1}{3}f(\optsol(\k_{2}))\geq\frac{1}{\rho}f^{*}(C)$.

Consider phase~2 and suppose $C\in[\k_{2},\k_{1}+\k_{2}]$. We have
\begin{equation}
f^{*}(\k_{1}+\k_{2})\leq f^{*}(\k_{2})+M\label{eq:estimate_optsol_C1+C2}
\end{equation}
because~$f$ is subadditive and because $\k_{1}$ is the minimum
weight of all elements. Furthermore, we have
\begin{equation}
f\bigl(\algperm(\k_{2})\bigr)\geq f^{*}(\k_{2})-M\geq\rho-M\geq M\label{eq:estimate_algsol_C2}
\end{equation}
where the first inequality follows from subadditivity of~$f$ and
the fact that the prefix $\algperm(\k_{2})$ contains at least all
but one element from $\optsol(\k_{2})$. Combining \eqref{eq:estimate_optsol_C1+C2}
and \eqref{eq:estimate_algsol_C2}, we obtain
\[
	f(\algperm(\k_{2}))\geq f^{*}(\k_{1}+\k_{2})-2M\geq f^{*}(\k_{1}+\k_{2})-2f(\algperm(\k_{2})),
\]
i.e., by monotonicity,
\[
	f^{*}(C)\leq f^{*}(\k_{1}+\k_{2})\leq3f(\algperm(\k_{2}))\leq\rho f(\algperm(\k_{2}))\leq\rho f(\algperm(C)).
\]

Now consider phase $i\in\{3,\dots,N\}$ and $C\in\bigl(\sum_{j=1}^{i-1}\k_{j},\sum_{j=1}^{i}\k_{j}]$.
Note that, for $1\leq j\leq i\leq N-1$, $\k_{i}\geq\delta^{i-j}\k_{j}$
and hence
\[
\sum_{j=1}^{i-1}\frac{\k_{j}}{\k_{i}}\leq\sum_{j=1}^{i-1}\frac{1}{\delta^{i-j}}<\sum_{j=1}^{\infty}\frac{1}{\delta^{j}}=\frac{1}{\delta-1}<1,
\]
i.e., we have $\sum_{j=1}^{i-1}\k_{j}\leq\k_{i}\leq\sum_{j=1}^{i}\k_{j}$.
If $i=N$ and $\sum_{j=1}^{i-1}\k_{j}\geq\k_{i}$, we have nothing
left to show. Thus, suppose that we have $\sum_{j=1}^{N-1}\k_{j}\leq\k_{N}$.
Furthermore, if $i=N$ and $f^{*}(C_{N})<\rho f^{*}(C_{N-1})$, we
again have nothing to show as the prefix $\algperm(\sum_{j=1}^{i-1}\k_{j})\subseteq\algperm(C)$
has value at least $f^{*}(C_{N-1})$. Thus, assume that $f^{*}(C_{N})\geq\rho f^{*}(C_{N-1})$.
This implies that~\eqref{eq:value_in _phase_i} also holds for $i=N$.

Suppose $C\in\bigl(\sum_{j=1}^{i-1}\k_{j},\k_{i}\bigr)$. We will
show that in this case, the value of the solution $C_{i-1}$, which
is already added by the algorithm, is large enough to guarantee competitiveness.
If $\k_{i}>\delta\k_{i-1}$ holds, then $\k_{i}$ is the smallest
integer such that $f^{*}(\k_{i})\geq\rho f^{*}(\k_{i-1})$, i.e.,
using \eqref{eq:MO}, we have
\begin{align*}
f(\algperm(C))\geq f\bigl(\algperm\bigl(\textstyle\sum_{j=1}^{i-1}\k_{j}\bigr)\bigr) \geq f^{*}(\k_{i-1})>\frac{1}{\rho}f^{*}(C).
\end{align*}
For the case that $\k_{i}\leq\delta\k_{i-1}$, we distinguish between
two different cases:

\paragraph{\bf Case 1: $i=3$} Let $c:=\bigl(\frac{1}{\lambda\sqrt{M}}+\frac{1}{\lambda^{2}}\bigr)\delta\k_{2}$.
Note that $\bigl(\frac{1}{\lambda}+\frac{1}{\lambda^{2}}\bigr)\delta=\frac{\lambda^{2}}{\lambda+1}-1-\frac{1}{\delta}$
by definition of $\lambda$ and $\delta$ and thus
\begin{eqnarray}
c & \leq & \biggl(\frac{1}{\lambda}+\frac{1}{\lambda^{2}}\biggr)\delta\k_{2}\nonumber \\
 & = & \biggl(\frac{\lambda^{2}}{\lambda+1}-1-\frac{1}{\delta}\biggr)\k_{2}\nonumber \\
 & \leq & \frac{\lambda^{2}}{\frac{\lambda}{\sqrt{M}}+1}\k_{2}-\k_{2}-\k_{1}\nonumber \\
 & = & \frac{\lambda^{2}M}{\lambda\sqrt{M}+M}\k_{2}-\k_{2}-\k_{1}.\label{eq:estimate_for_j_in_phase_three}
\end{eqnarray}
 We will show that $\algperm(\k_{1}+\k_{2})$ is competitive up to
size $\k_{1}+\k_{2}+c$, and that $\algperm(\k_{1}+\k_{2}+c)$ is
competitive up to size $\k_{3}$. We have
\begin{eqnarray*}
f^{*}(\k_{1}+\k_{2}+c) & \overset{\eqref{eq:estimation_value-of-two-optimum-solutions}}{\leq} & \frac{\k_{1}+\k_{2}+c}{\k_{2}}\bigl(f^{*}(\k_{2})+M\bigr)\\
 & \overset{\eqref{eq:estimate_for_j_in_phase_three}}{\leq} & \frac{\k_{1}+\k_{2}+\Bigl(\frac{\lambda^{2}M}{\lambda\sqrt{M}+M}\k_{2}-\k_{2}-\k_{1}\Bigr)}{\k_{2}}\bigl(f^{*}(\k_{2})+M \bigr)\\
 & = & \frac{\lambda^{2}M}{\lambda\sqrt{M}+M}\biggl(1+\frac{M}{f^{*}(\k_{2})}\biggr)f^{*}(\k_{2})\\
 & \overset{\eqref{eq:value_in _phase_i}}{\leq} & \frac{\lambda^{2}M}{\lambda\sqrt{M}+M}\biggl(1+\frac{M}{\lambda\sqrt{M}}\biggr)f^{*}(\k_{2})\\
 & = & \lambda\sqrt{M}\frac{\lambda\sqrt{M}}{\lambda\sqrt{M}+M}\biggl(\frac{\lambda\sqrt{M}+M}{\lambda\sqrt{M}}\biggr)f^{*}(\k_{2})\\
 & \leq & \rho f^{*}(\k_{2})\\
 & \leq & \rho f(\algperm(\k_{1}+\k_{2})),
\end{eqnarray*}
where the last inequality follows from the fact that the algorithm
starts by packing $\optsol(\k_{1})$ and $\optsol(\k_{2})$ before
any other elements and needs capacity $\k_{1}+\k_{2}$ to assemble
both sets, i.e., $\optsol(\k_{2}) \subseteq\algperm(\k_{1}+\k_{2})$.

Since $\algscale$ adds the elements from $\optsol(\k_{3})$ after
those from $\optsol(\k_{1})$ and $\optsol(\k_{2})$, we have $\optsol(\k_{3},c) \subseteq\algperm(\k_{1}+\k_{2}+c)$,
and thus
\begin{eqnarray*}
f\bigl(\algperm(\k_{1}+\k_{2}+c)\bigr) & \overset{\eqref{eq:MO}}{\geq} & f(\optsol_{\k_{3},c})\\
 & \overset{\eqref{eq:estimate_value-of-subset}}{\geq} & \frac{c}{\k_{3}}f^{*}(\k_{3})-M\\
 & \geq & \biggl[\biggl(\frac{1}{\lambda\sqrt{M}}+\frac{1}{\lambda^{2}}\biggr)-\frac{M}{f^{*}(\k_{3})}\biggr]f^{*}(\k_{3})\\
 & \overset{\eqref{eq:value_in _phase_i}}{\geq} & \biggl(\frac{1}{\lambda\sqrt{M}}+\frac{1}{\lambda^{2}}-\frac{M}{\lambda^{2}M}\biggr)f^{*}(\k_{3})\\
 & = & \frac{1}{\lambda\sqrt{M}}f^{*}(\k_{3})\\
 & \geq & \frac{1}{\rho}f^{*}(\k_{3}),
\end{eqnarray*}
where for the third inequality we use that $\k_{3}\leq\delta\k_{2}$.
This, together with monotonicity of $f$, implies $f^{*}(C)\leq\rho f(\algperm(C))$
for all $C\in(\k_{1}+\k_{2},\k_{3}]$.

\paragraph{\bf Case 2: $i\geq4$}: Recall that $C\in\bigl(\sum_{j=1}^{i-1}\k_{j},\k_{i}\bigr)$.
We have
\begin{eqnarray*}
f^{*}(C) & \overset{\eqref{eq:MO}}{\leq} & f^{*}(\k_{i})\\
 & \overset{\eqref{eq:estimation_value-of-two-optimum-solutions}}{\leq} & \frac{\k_{i}}{\k_{i-1}}(f^{*}(\k_{i-1})+M)\\
 & \leq & \delta\biggl(1+\frac{M}{f^{*}(\k_{i-1})}\biggr)f^{*}(\k_{i-1})\\
 & \overset{\eqref{eq:value_in _phase_i}}{\leq} & \delta\biggl(1+\frac{M}{\lambda^{2}M}\biggr)f^{*}(\k_{i-1})\\
 & = & \frac{\lambda^{3}}{\lambda^{2}+1}\biggl(1+\frac{1}{\lambda^{2}}\biggr)f^{*}(\k_{i-1})\\
 & = & \lambda f^{*}(\k_{i-1})\\
 & \leq & \rho f(\algperm(C)),
\end{eqnarray*}
where for the third inequality we use $\k_{i}\leq\delta\k_{i-1}$.
Thus, also in this case, we find $f^{*}(C)\leq\rho f(\algperm(C))$
for all $C\in\bigl(\sum_{j=1}^{i-1}\k_{j},\k_{i}\bigr)$.

Now, consider $C\in\bigl[\k_{i},\sum_{j=1}^{i}\k_{j}\bigr]$ with
$i<N$. Up to this budget, the algorithm had a capacity of $C-\sum_{j=1}^{i-1}\k_{j}>C-\k_{i}\geq0$
to pack elements from $\optsol(\k_{i})$, i.e., $\optsol(\k_{i},C-\sum_{j=1}^{i-1}\k_{j})\subseteq\algperm(C)$.
We will show that the value of this set is large enough to guarantee
competitiveness in this case. We have
{
\allowdisplaybreaks
\begin{eqnarray*}
f\bigl(\algperm(C)\bigr) & \overset{\eqref{eq:MO}}{\geq} & f\Bigl(\optsol\bigl(\k_{i},C-\textstyle\sum_{j=1}^{i-1}\k_{j}\bigr)\Bigr)\\
 & \overset{\eqref{eq:estimate_value-of-subset}}{\geq} & \frac{C-\sum_{j=1}^{i-1}\k_{j}}{\k_{i}}f^{*}(\k_{i})-M\\
 & \overset{\eqref{eq:estimation_value-of-two-optimum-solutions}}{\geq} & \frac{C-\sum_{j=1}^{i-1}\k_{j}}{\k_{i}}\biggl(\frac{\k_{i}}{C}f^{*}(C)-M\biggr)-M\\
 & = & \biggl(\frac{C-\sum_{j=1}^{i-1}\k_{j}}{C}-\frac{C-\sum_{j=1}^{i-1}\k_{j}}{\k_{i}}\cdot\frac{M}{f^{*}(C)}-\frac{M}{f^{*}(C)}\biggr)f^{*}(C)\\
 & \overset{}{\geq} & \biggl(1-\sum_{j=1}^{i-1}\frac{\k_{j}}{\k_{i}}-1\cdot\frac{M}{f^{*}(C)}-\frac{M}{f^{*}(C)}\biggr)f^{*}(C)\\
 & \overset{\eqref{eq:value_in _phase_i}}{\geq} & \biggl(1-\sum_{j=1}^{i-1}\frac{1}{\delta^{i-j}}-\frac{2M}{\rho^{i-1}}\biggr)f^{*}(C)\\
 & \geq & \biggl(1-\sum_{j=1}^{\infty}\frac{1}{\delta^{j}}-\frac{2M}{\rho^{i-1}}\biggr)f^{*}(C)\\
 & \geq & \biggl[1-\biggl(\frac{1}{1-\delta^{-1}}-1\biggr)-\frac{2M}{\lambda^{2}M}\biggr]f^{*}(C)\\
 & \geq & 0.319\cdot f^{*}(C)\\
 & \geq & \frac{1}{\rho}f^{*}(C),
\end{eqnarray*}
}
where for the fourth inequality we use $\k_{i}\leq C\leq\sum_{j=1}^{i}\k_{j}$ and for the fifth inequality we use $\k_{j+1}\geq\delta\k_{j}$.
\end{proof}

For $1$-bounded objectives, Theorem~\ref{thm:upper_bound_M} immediately
yields the following.
\begin{corollary}
If $M=1$, the incremental solution computed by $\algscale$ is $3.2924$-competitive.
\end{corollary}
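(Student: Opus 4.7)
The plan is to derive this corollary directly from Theorem~\ref{thm:upper_bound_M}, which states that $\algscale$ produces a $\rho$-competitive solution for $\rho = \max\{\lambda\sqrt{M}, 2M\}$ with $\lambda \approx 3.2924$. Substituting $M = 1$ yields $\rho = \max\{\lambda, 2\}$, so the only thing to verify is that $\lambda \geq 2$, which is immediate from the fact that $\lambda \approx 3.2924$ (one could also check this from its defining polynomial $\lambda^7 - 2\lambda^6 - 3\lambda^5 - 3\lambda^4 - 3\lambda^3 - 2\lambda^2 - \lambda - 1 = 0$ by noting that the polynomial is negative at $\lambda = 2$ and positive at $\lambda = 3.3$, so the unique real root exceeds~$2$).

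Hence for $M = 1$ the maximum in the expression for $\rho$ is attained by the first argument, giving $\rho = \lambda \approx 3.2924$. No additional steps or case analysis are needed beyond citing the theorem; the only subtlety is observing which branch of the $\max$ is active, and there is no genuine obstacle.
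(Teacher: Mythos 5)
Your proposal is correct and matches the paper's (implicit) argument exactly: the corollary is obtained by substituting $M=1$ into Theorem~\ref{thm:upper_bound_M} and noting that $\max\{\lambda,2\}=\lambda\approx3.2924$ since $\lambda>2$. The verification via the sign change of the defining polynomial between $2$ and $3.3$ is a nice touch but not needed beyond the stated approximation.
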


\section{Lower bound\label{sec:lb}}

In this section, we show the second part of \Cref{thm:upper_and_lower_bound_for_M-bounded},
i.e., we give a lower bound on the competitive ratio for the incremental
optimization problem~(\ref{eq:generic-opt}) with monotone, $M$-bounded,
and fractionally subadditive objectives, and we show a lower bound
for the special case with $M=1$.

\begin{theorem}
For monotone, $M$-bounded, and fractionally subadditive objectives,
the knapsack problem~(\ref{eq:generic-opt}) does not admit a $\rho$-competitive
incremental solution for $\rho<M$.
\end{theorem}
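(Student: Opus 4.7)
The plan is to exhibit an explicit instance on a small ground set $E$ for which no incremental ordering attains competitive ratio below $\sqrt{6}$. I would define the objective $f$ through its XOS representation (cf.\ Example~\ref{exa:XOS}) as the pointwise maximum of a short list of linear functions $L_{1},\dots,L_{k}$, which automatically guarantees fractional subadditivity. The $L_{i}$ are to be chosen so that every singleton satisfies $f(\{e\})=1$ (enforcing $1$-boundedness), and so that two or three disjoint ``critical'' subsets $T_{1},T_{2},\dots$ of carefully chosen weights $C_{1}<C_{2}<\dots$ hit prescribed values $v_{1}<v_{2}<\dots$; these will coincide with the optima $\optsol_{C_{1}},\optsol_{C_{2}},\dots$ by construction, while any subset mixing elements from different $T_{j}$'s has a value that can be read off directly from the $L_{i}$'s.

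Given such an instance, I would argue by case analysis on the initial segments of an arbitrary ordering~$\pi$. To attain ratio below $\sqrt{6}$ at capacity~$C_{1}$, the prefix $\pi(C_{1})$ must contain enough of $T_{1}$ to push its value past $v_{1}/\sqrt{6}$, and an analogous requirement applies at~$C_{2}$. These forced commitments at~$C_{1}$ consume part of the budget available for assembling $T_{2}$ by time~$C_{2}$, and by the structure of the XOS representation any prefix that blends elements of $T_{1}$ with a partial copy of $T_{2}$ falls short of $v_{2}/\sqrt{6}$ on at least one branch of the case distinction. Enumerating the possible choices of the first few elements of~$\pi$, some critical capacity will witness the desired lower bound. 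A final parameter optimization fixes the weights, the $L_{i}$-coefficients, and the values $v_{j}$ so that the ratio tightens simultaneously on every branch to exactly~$\sqrt{6}$.

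I expect the construction of~$f$ to be the main obstacle. The $1$-boundedness constraint forces $f(S)\leq|S|$ for every~$S$, which rules out the strongly superadditive gadgets that usually drive lower bounds; the gap between $f(\pi(C))$ and $f^{*}(C)$ therefore has to be engineered through a delicate interplay between the weights and the XOS representation, so that ``wrong'' mixtures accumulate noticeably less value than the ``right'' assembly at each critical capacity. Verifying simultaneously that (a) every linear function $L_{i}$ remains below its target value on every mixed subset, (b) every singleton has value exactly~$1$, and (c) the critical subsets $T_{j}$ realize the values $v_{j}$, will require inspecting the XOS certificate on each relevant subset by hand and tuning the parameters so that all competing constraints become binding precisely at the bound~$\sqrt{6}$.
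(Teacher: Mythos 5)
Your plan matches the paper's strategy exactly in outline---an explicit XOS instance built as a pointwise maximum of modular functions, with a few disjoint ``critical'' sets realizing the optima at a handful of capacities, followed by a case analysis on the first few elements of any ordering $\pi$. But the proposal stops precisely where the proof begins: you never exhibit the instance, and you yourself flag the construction of $f$ as ``the main obstacle.'' For a lower bound, the explicit construction and its verification \emph{are} the proof; a description of the properties the instance should have, together with the observation that the parameters must then be tuned so all cases tie at $\sqrt{6}$, is a research plan rather than an argument. In particular, nothing in the proposal certifies that an instance with all the stated properties exists, and the tension you correctly identify (fractional subadditivity plus $1$-boundedness force $f(S)\leq|S|$, killing superadditive gadgets) is exactly the nontrivial difficulty that must be resolved by concrete numbers.

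For reference, the paper resolves it with ten elements split into $\gset_{1}=\{e_{1}\}$, $\gset_{2}=\{e_{2},e_{3},e_{4}\}$, $\gset_{3}=\{e_{5},\dots,e_{10}\}$ of weights $101$, $102$, $103$ respectively, and $f(X)=\max\{|X\cap\gset_{1}|,\tfrac{\sqrt{6}}{3}|X\cap\gset_{2}|,|X\cap\gset_{3}|,|X\cap\{e_{2}\}|,|X\cap\{e_{3}\}|,|X\cap\{e_{4}\}|\}$. Two details here are essential and absent from your sketch: the extra singleton terms $|X\cap\{e_{i}\}|$ are needed because the ``team'' coefficient $\tfrac{\sqrt{6}}{3}<1$ would otherwise violate $1$-boundedness on singletons of $\gset_{2}$; and the nearly equal weights $101<102<103$ are what control exactly which mixtures fit at the critical capacities $101$, $306$, $515$, and $618$, forcing $\pi$ to start with $e_{1}$ and then two elements of $\gset_{2}$, after which a two-branch case analysis on the fourth element yields ratios $\sqrt{6}$ and $5/2$. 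Without producing such numbers and checking each capacity, the claim is not established.
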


\begin{proof}
Consider the set $\gset=\{e_{1},e_{2}\}$ with weights $w(e_{i})=i$
for $i\in\{1,2\}$ and the values $v(e_{1})=1$ and $v(e_{2})=M$.
We define the objective $f(S):=\sum_{e\in S}v(e)$ for all $S\subseteq\gset$.
It is easy to see that $f$ is monotone, $M$-bounded and modular
and thus fractionally subadditive.

Consider some competitive algorithm with competitive ratio $\rho\geq0$
for the knapsack problem~(\ref{eq:generic-opt}). In order to be
competitive for capacity $1$, the algorithm has to add element $e_{1}$
first. Thus, the solution of the algorithm of size $2$ cannot contain
element~$e_{2}$, i.e., the value of the solution of capacity $2$
given by the algorithm has value $1$. The optimal solution of capacity
$2$ has value $M$, and thus $\rho\geq M$.
\end{proof}

We proceed to give a stronger lower bound for $M\in[1,\varphi+1)$ with $\varphi=\frac{1}{2}(1+\sqrt{5})$.
To this end, for a value
$n\in\N$, consider an instance~$I$ with $\sum_{i=1}^{n}i=\frac{1}{2}n(n+1)$
elements partitioned into sets $E_{1},E_{2},\dots,E_{n}$ such that
$|E_{i}|=i$ for all $i\in\left\{ 1,\dots,n\right\} $. We set 
\begin{align}
	\label{eq:definition_of_lower_bound_objective_function}
f(S)=\max_{i\in\{1,\dots,n\}} |S\cap E_{i}| \quad \text{for all $S\subseteq\gset$.}
\end{align}

The elements' weights are defined as $w(e)=\base+i!$
for all $e\in E_{i}$ with base weight $\base= (n+2)!$.

We note that the problem instance~$I$
is built in such a way that the elements in
all sets $\gset_{1},\dots,\gset_{n}$ have roughly the same relative weight
because $\base$ is very large. We show first, for a given capacity $C\in\N$,
the number of elements that can be packed without exceeding this capacity can vary
by at most 1, regardless of which elements are packed. Yet, the weights
of elements in $\gset_{i}$ increase quickly enough with increasing~$i$ such that, for capacity $C=i(\base+i!)$ it is only possible to pack
$i$ elements if all $i$ elements are from the set $\gset_{1}\cup\dots\cup\gset_{i}$.

\begin{proposition}
	\label{prop:number_of_elements_in_lower_bound_solution}
	Let $\pi$ be a solution to the instance defined above. Consider capacity $C=i(b+i!)$ for $i \in \{1,\dots, \frac{1}{2}n(n+1)\}$.
	Then, we have
	\[
	|\pi(C)|=
	\begin{cases}
		i &\text{if } \{e_{\pi(1)},\dots,e_{\pi(i)}\} \subseteq E_1 \cup \dots \cup E_i, \\
		i-1 &\text{else}
	\end{cases}
	\]
	with $E_j= \emptyset$ for $j>n$.
\end{proposition}
\begin{proof}
	First we observe that $|\pi(C)| \in \{i,i-1\}$: Assume $|\pi(C)|< i-1$. Then, we have
	\begin{align*}
		C-w(\pi(C)) &\geq i(b+i!)- (i-2)(b+n!) \geq 2b-in! \\
		&\geq b+ (n+2)!- n(n+1)n! = b+2(n+1)! \geq \max_{e \in E} w(e),
	\end{align*}
	contradicting that $\pi(C)$ is the maximum prefix of $\pi$ which fits into capacity~$C$.
	Assume $|\pi(C)|> i$. Then, we have $w(\pi(C)) > (i+1)b > ib+(i+1)! > i(b+i!)$, which contradicts $w(\pi(C))\leq C$.
	
	If $\pi(C)$ contains $e \in E_j$ with $j>i$, we have
	\begin{align*}
		C-w(e) &= i(b+i!)-(b+j!) = (i-1)b+ii!-j! \\
		&< (i-1)b < (i-1) \min_{e \in E} w(e).
	\end{align*}
	So $\pi(C) \setminus e$ contains at most $i-2$ elements and $|\pi(C)|\leq i-1$. If the first~$i$ elements in $\pi$ are in the sets $E_1, \dots E_i$, we have $w(\{e_{\pi(1)}, \dots, e_{\pi(i)}\}) \leq i(b+i!) = C$.
	Therefore, we have $|\pi(C)|\geq i$.
\end{proof}

We say an incremental solution to the problem instance~$I$ given above
\emph{is represented by} a set of numbers $A=\{a_1, \dots, a_\ell\}$ with
$a_i < a_{i+1}$ and $a_\ell=n$ if the solution adds first all elements
from~$E_{a_1}$, then from $E_{a_2}$, and so on until adding all elements
from $E_{a_\ell}$. Only afterwards all remaining elements are added in an
arbitrary order. Note, that elements added after the last element of
$E_n$ in any solution do not influence the objective value for any
capacity, since when they are added the incremental solution has already
reached the maximum value of $n$. 
First we will observe, that every incremental solution of problem instance~$I$ can be
transformed into a solution that can be represented by a set $\{a_1,...,a_\ell\}$
without decreasing the objective value for any capacity.

\begin{lemma}
	\label{lem:ordered_lower_bound_solution}
	For every incremental solution $\pi$ there is a set $A=\{a_1, \dots, a_\ell\}$ with $a_i < a_{i+1}$ and $a_{\ell}=n$ representing an incremental solution with objective value at least $f(\pi(C))$ for all capacities~$C\geq0$.
\end{lemma}
\begin{proof}
	First, we show that there is a solution $\pi'$, whose objective value is at least $f(\pi(C))$ for every capacity, such that, for all $i\in\{1,...,n-1\}$, if at least one element from the set $E_i$ is added before the last element from $E_n$ is added, then this is true for all elements from $E_i$.
	Furthermore, if $h\in\N$ is the index of the last element from~$E_i$ in $\pi'$, we have $f(\{e_{\pi'(1)},...,e_{\pi'(h-1)}\}) + 1 = i = f(\{e_{\pi'(1)},...,e_{\pi'(h)}\})$.
	We will do this by altering the solution~$\pi$ to obtain the desired solution~$\pi'$.
	
	Fix some $i\in\N$ such that at least one element from the set $E_i$ is added before the last element from $E_n$ is added in the solution~$\pi$.
	Let $j\in\N$, $j\leq i$ be the largest number such that, when the $j$-th element from the set $E_i$ is added, the value of the solution increases from $j-1$ to~$j$.
	If this does not exist, we set $j=0$.
	If $j=i$, $E_i$ is completely added before the last element from $E_n$ and when the last element from $E_i$ is added to the incremental solution its value increases by one to the value of $i$.
	Thus, suppose that $j<i$.
	All elements from $E_i$ that are added after the $j$-th element do not increase the value of the solution and can thus be moved to the end of the whole order $\pi$.
	Since now, there are only~$j$ elements from the set $E_i$ added before the last element from the set $E_n$ is added, it makes sense to add the elements from the set~$E_j$ instead of these~$j$ elements, as they have a smaller weight (if they are not already added).
	We can then move the~$j$ elements from $E_i$ to the end of the order.
	After all these changes the solution obtains all values at least as fast as before.
	By doing this for all $i\in\{1,...,n-1\}$, we obtain the desired solution~$\pi'$.
	
	Now, we will show that we can reorder the elements in the solution~$\pi'$ such that the elements that are added before the last element from $E_n$ is added are ordered by the index of the set they belong to.
	Consider any two sets $E_i$, $E_j$, $i<j$ that are added before the last element from set $E_n$ is added.
	Recall that, when the last element from~$E_i$ is added, the value of the solution is~$i$.
	This implies that at that point at most $i-1$ elements from the set~$E_j$ are added.
	Thus, swapping the elements of~$E_i$ and~$E_j$ in the order $\pi'$ until all elements from $E_i$ are added before the elements from~$E_j$, does not decrease the value of the solution for any capacity.
	By doing this for all pairs $(i,j)$, we obtain a solution that can be represented by a set $A=\{a_1,...,a_\ell\}$.
\end{proof}	

Utilizing the properties of the weights we mentioned before, we can
find a collection of conditions which are necessary and sufficient
for a set of numbers $\{a_{1},\dots,a_{\ell}\}$ to represent a $\rho$-competitive
solution for the problem instance~$I$.
In the following, we denote by $\ell'$ the index with $\rho a_{\ell'}<n$ and $\rho a_{\ell'+1} \geq n$ and set $z_{i}:=\sum_{j=1}^{i}a_{j}$. The index $\ell'$ is needed because all indices $i>\ell'$ satisfy $\rho a_i \geq n$, i.e., after a solution has added the set $E_i$, it is $\rho$-competitive for all capacities.
\begin{lemma}
\label{lem:necessary_and_sufficient_condition_for_competitiveness_of_set}Let
$A=\{a_{1},\dots,a_{\ell}\}$ with $a_{i}<a_{i+1}$ and $a_{\ell}=n$ be a set of numbers
that represents an incremental solution for instance~$I$. Then, the incremental solution
is $\rho$-competitive if and only if the following three conditions
are satisfied:
\begin{enumerate}[label=\textrm{(\roman*)}, ref=\textit{(\roman*)}]
	\item \label{it:numbers:1} $a_{1}=1$,
	
	\item the following two conditions hold for all $i\in\{1,\dots,\ell'\}$:
	\begin{enumerate}[label=(ii\alph*), ref=\textit{(ii\alph*)}]
	\item \label{it:numbers:2a} $z_{i}+a_{i}\leq\lfloor\rho a_{i}\rfloor$ if $a_{i+1}\leq\lfloor\rho a_{i}\rfloor+1$,
	\item \label{it:numbers:2b} $z_{i}+a_{i}+1\leq\lfloor\rho a_{i}\rfloor$ if $a_{i+1}>\lfloor\rho a_{i}\rfloor+1$.
	\end{enumerate}
\end{enumerate}
\end{lemma}

\begin{proof}
We first show that for a $\rho$-competitive incremental solution represented by a set of numbers~$A$ conditions~\ref{it:numbers:1}, \ref{it:numbers:2a} and \ref{it:numbers:2b} have to be satisfied.

If $a_{1}\neq1$, the incremental solution is not competitive for capacity
$C=\base+1$, i.e.,~\ref{it:numbers:1} must hold.

Consider capacity $C=(\lfloor\rho a_{i}\rfloor+1)(\base+(\lfloor\rho
a_{i}\rfloor+1)!)$ for $i \in \{1,\dots,l'\}$. The optimum solution of
capacity $C$ is $\optsol(C) = \gset_{\lfloor\rho a_{i}\rfloor+1}$
and has value $f^*(C) = \lfloor\rho a_{i}\rfloor+1$. The value of
the incremental solution is at least $a_{i}+1$, since $\frac{1}{\rho}(\lfloor\rho
a_{i}\rfloor+1)>a_{i}$. Thus, the incremental solution of capacity~$C$ contains at least $a_{i}+1$
elements from $a_{i+1}$.

If $a_{i+1} \leq \lfloor\rho a_{i}\rfloor+1$, the incremental solution contains
$\lfloor\rho a_{i}\rfloor+1-z_i$ from $a_{i+1}$
by~\Cref{prop:number_of_elements_in_lower_bound_solution} at capacity~$C$. Thus, we have
$a_{i}+1 \leq \lfloor\rho a_{i}\rfloor+1-z_i$ which
implies~\ref{it:numbers:2a}. If $a_{i+1} > \lfloor\rho a_{i}\rfloor+1$,
the incremental solution contains $\lfloor\rho a_{i}\rfloor-z_i$ from~$a_{i+1}$
by~\Cref{prop:number_of_elements_in_lower_bound_solution}. Thus, we have
$a_{i}+1 \leq \lfloor\rho a_{i}\rfloor-z_i$ which implies
\ref{it:numbers:2b}.

We proceed to show that, conversely, an incremental solution represented by a set of numbers~$A$ satisfying conditions \ref{it:numbers:1}, \ref{it:numbers:2a} and \ref{it:numbers:2b} is $\rho$-competitive. To this end, fix an arbitrary incremental solution with these properties.
Since all elements have integer weight, it suffices to show $\rho$-competitiveness
for all capacities $C\in\N$.

For capacities $C\in\{1,\dots,\base+1\}$, the incremental solution is $\rho$-competitive
because~$\base+1$ is the smallest weight of all elements and $a_{1}=1$
by \ref{it:numbers:1}, i.e., the element of smallest weight is added first.

Let $i\in\{1,\dots,\ell'\}$.
We will show that, for all capacities
in 
\begin{align*}
\bigl\{z_{i}(\base+z_{i}!)+1,\dots,z_{i+1}(\base+z_{i+1}!)\bigr\},
\end{align*}
the incremental solution is $\rho$-competitive. For all capacities
\begin{align}
\label{eq:c-values}
C \in \bigl\{z_{i}(\base+z_{i}!)+1,\dots,(\lfloor\rho a_{i}\rfloor+1)(\base+(\lfloor\rho a_{i}\rfloor+1)!)-1 \bigr\},
\end{align}
the value of the optimum solution is at most $\lfloor\rho a_{i}\rfloor$
while the value of the incremental solution is at least~$a_{i}$
because it contains at least all elements from $\gset_{1},\dots,\gset_{a_{i}}$
at capacity $z_{i}(\base+z_{i}!)$ by \Cref{prop:number_of_elements_in_lower_bound_solution}.
Thus, the incremental solution is $\rho$-competitive for all values~$C$ as in \eqref{eq:c-values}.
Next, suppose that
\begin{align*}
C \in \bigl\{(\lfloor\rho a_{i}\rfloor+1)(\base+(\lfloor\rho a_{i}\rfloor+1)!),\dots,z_{i+1}(\base+z_{i+1}!) \bigr\}.
\end{align*}
Let $a^{*}\in\{\lfloor\rho a_{i}\rfloor+1,\dots,z_{i+1}\}$
be the value of the optimum solution of capacity~$C$. This implies
that $C\geq a^{*}(\base+a^{*}!)$. We consider two cases.

\paragraph{\bf Case 1: $a_{i+1}\leq\lfloor\rho a_{i}\rfloor+1$} By \ref{it:numbers:2a}, we have $z_{i}+a_{i}\leq\lfloor\rho a_{i}\rfloor$
and thus
\begin{align}
a^{*}-z_{i}  \geq  a_{i}+a^{*}-\lfloor\rho a_{i}\rfloor 
  \geq  \frac{1}{\rho}\lfloor\rho a_{i}\rfloor+\frac{1}{\rho}\bigl(a^{*}-\lfloor\rho a_{i}\rfloor\bigr) 
  =  \frac{1}{\rho}a^{*}.\label{eq:a-star_minus_zi_competitive}
\end{align}
By \Cref{prop:number_of_elements_in_lower_bound_solution}
the incremental solution contains at least $a^{*}-z_{i}$ elements from the
set $\gset_{a_{i+1}}$ at capacity~$C$ since $a_{i+1}\leq\lfloor\rho a_{i}\rfloor+1 \leq a^*$. This means that, by~(\ref{eq:a-star_minus_zi_competitive}),
the incremental solution of capacity~$C$ has value at least
$a^{*}-z_{i}\geq\frac{1}{\rho}a^{*}$, i.e., the solution is $\rho$-competitive
for capacity $C$.

\paragraph{\bf Case 2: $a_{i+1}>\lfloor\rho a_{i}\rfloor+1$} By \ref{it:numbers:2b},
we have $z_{i}+a_{i}+1\leq\lfloor\rho a_{i}\rfloor$ and thus
\begin{align}
a^{*}-z_{i}-1  \geq  a_{i}+a^{*}-\lfloor\rho a_{i}\rfloor 
  \geq  \frac{1}{\rho}\lfloor\rho a_{i}\rfloor+\frac{1}{\rho}\bigl(a^{*}-\lfloor\rho a_{i}\rfloor\bigr)
  =  \frac{1}{\rho}a^{*}.\label{eq:a-star_minus_zi_minus_one_competitive}
\end{align}
By \Cref{prop:number_of_elements_in_lower_bound_solution} the incremental solution contains at least $a^{*}-z_{i}-1$ elements from
the set $\gset_{a_{i+1}}$ at capacity~$C$. This means that, by~(\ref{eq:a-star_minus_zi_minus_one_competitive}),
the incremental solution of capacity~$C$ has value at least
$a^{*}-z_{i}-1\geq\frac{1}{\rho}a^{*}$, i.e., the solution is
$\rho$-competitive for capacity~$C$.

We conclude that for every capacity~$C \in \bigl\{1,\dots,z_{\ell'+1}(\base+z_{\ell'+1}!)\bigr\}$ the incremental solution is $\rho$-competitive. For all capacities $C > z_{\ell'+1}(\base+z_{\ell'+1}!)$,
the value of the incremental solution is at least~$a_{\ell'+1}$, while
the optimum solution has value at most~$n$. By definition of $\ell'$ we have $\rho a_{\ell'+1} \geq n$. Therefore, the incremental solution
is $\rho$-competitive.
\end{proof}

In the following we will 
show that, for $2\leq\rho\leq\varphi+1$ and given some
set of numbers $\{a_{1},\dots,a_{i}\}$, every algorithm is forced to choose
$a_{i+1}\leq\lfloor\rho a_{i}\rfloor+1$ to be $\rho$-competitive for capacity
$\lfloor\rho a_{i}\rfloor+1$.

Applying condition \ref{it:numbers:2b} from \Cref{lem:necessary_and_sufficient_condition_for_competitiveness_of_set}
yields the following.
\begin{corollary}
\label{cor:condition_for_following_number}If a set of numbers $\{a_{1},\dots,a_{\ell}\}$
represents a $\rho$-competitive incremental solution and $z_{i}+a_{i}=\lfloor\rho a_{i}\rfloor$
for some $i\in\{1,\dots,\ell'\}$, then
\begin{align}
\label{eq:condition_for_following_number}	
a_{i+1}\leq\lfloor\rho a_{i}\rfloor+1.
\end{align}
\end{corollary}

\begin{proposition}
\label{prop:small_rho_bounded_number-size}Let $\rho\in[2,\varphi+1]$,
and let $A=\{a_{1},\dots,a_{\ell}\}$ with $a_{i}<a_{i+1}$ be a set
of numbers that represents an incremental solution. If the incremental solution
is $\rho$-competitive, then $a_{i+1}\leq\lfloor\rho a_{i}\rfloor+1$
for all $i\in\{1,\dots,\ell'\}$.
\end{proposition}

\begin{proof}
By \Cref{cor:condition_for_following_number}, it suffices
to show that we have $z_{i}+a_{i}=\lfloor\rho a_{i}\rfloor$ for all $i\in\{1,\dots,\ell'\}$.
We will prove this by induction. For $i=1$, we have
\begin{align*}
z_{1}+a_{1}=1+1=2=\lfloor\rho\rfloor=\lfloor\rho a_{1}\rfloor,
\end{align*}
where we use the fact that $a_{1}=1$ by \Cref{lem:necessary_and_sufficient_condition_for_competitiveness_of_set}\ref{it:numbers:1}.

Suppose that
\begin{equation}
z_{i}+a_{i}=\lfloor\rho a_{i}\rfloor\label{eq:induction_hypothesis}
\end{equation}
holds for some $i\in\{1,\dots,\ell'-1\}$. By \Cref{lem:necessary_and_sufficient_condition_for_competitiveness_of_set}\ref{it:numbers:2a} and \ref{it:numbers:2b} and the $\rho$-com\-pe\-ti\-ti\-ve\-ness of the incremental solution,
we have $z_{i+1}+a_{i+1}\leq\lfloor\rho a_{i+1}\rfloor$. Thus, we
only have to show that
\begin{equation}
z_{i+1}+a_{i+1}\geq\lfloor\rho a_{i+1}\rfloor.\label{eq:estimate_set-numbers_size_ind_to-show}
\end{equation}
To prove this, we first calculate for $\rho \in (2,\varphi+1]$:
\begin{multline}
\frac{(3-\rho)(\rho-1)}{\rho-2}  =  \frac{-(\rho-2)^{2}+1}{\rho-2}
=  \frac{1}{\rho-2}-(\rho-2) \\
\geq  \frac{1}{\varphi-1}-(\varphi-1)
=  \varphi-(\varphi-1) =  1, \label{eq:ineq_ind_1}
\end{multline}
where for the inequality we use $\rho\leq\varphi+1$. If $\rho=2$ we can directly calculate $\rho-2 = 0 \leq 1 = (3-\rho)(\rho-1)$.
We then obtain
\begin{eqnarray}
(3-\rho)\lfloor(\rho-1)a_{i}\rfloor+1 & > & (3-\rho)((\rho-1)a_{i}-1)+1\nonumber \\
 & = & (3-\rho)(\rho-1)a_{i}+\rho-2\nonumber \\
 & \overset{\eqref{eq:ineq_ind_1}}{\geq} & (\rho-2)a_{i}+\rho-2\nonumber \\
 & = & (\rho-2)(a_{i}+1).\label{eq:ineq_ind_2}
\end{eqnarray}
Utilizing this inequality, we have
\begin{eqnarray}
\lfloor(\rho-2)(\lfloor\rho a_{i}\rfloor+1)\rfloor & = & \lfloor(\rho-2)(\lfloor(\rho-1)a_{i}\rfloor+a_{i}+1)\rfloor\nonumber \\
 & = & \lfloor\lfloor(\rho-1)a_{i}\rfloor+(\rho-3)\lfloor(\rho-1)a_{i}\rfloor+(\rho-2)(a_{i}+1)\rfloor\nonumber \\
 & = & \lfloor(\rho-1)a_{i}\rfloor+\lfloor(\rho-3)\lfloor(\rho-1)a_{i}\rfloor+(\rho-2)(a_{i}+1)\rfloor\nonumber \\
 & \overset{\eqref{eq:ineq_ind_2}}{<} & \lfloor(\rho-1)a_{i}\rfloor+\lfloor1\rfloor,\label{eq:ineq_ind_3}
\end{eqnarray}
where for the third equation we use that $\lfloor(\rho-1)a_{i}\rfloor\in\N$.
Because both sides of this inequality are in $\N$, we have
\begin{eqnarray*}
\lfloor\rho a_{i+1}\rfloor & = & \lfloor(\rho-2)a_{i+1}\rfloor+2a_{i+1}\\
 & \overset{\eqref{eq:condition_for_following_number}}{\leq} & \lfloor(\rho-2)(\lfloor\rho a_{i}\rfloor+1)\rfloor+2a_{i+1}\\
 & \overset{\eqref{eq:ineq_ind_3}}{\leq} & \lfloor(\rho-1)a_{i}\rfloor+2a_{i+1}\\
 & = & \lfloor\rho a_{i}\rfloor-a_{i}+2a_{i+1}\\
 & \overset{\eqref{eq:induction_hypothesis}}{=} & z_{i}+2a_{i+1}\\
 & = & z_{i+1}+a_{i+1},
\end{eqnarray*}
i.e., (\ref{eq:estimate_set-numbers_size_ind_to-show}) holds. By
induction $z_{i}+a_{i}=\lfloor\rho a_{i}\rfloor$ follows for all
$i\in\{1,\dots,\ell'\}$, and therefore, by \Cref{cor:condition_for_following_number},
the proposition holds.
\end{proof}

\begin{theorem}
\label{thm:lower_bound_phi+1}
For $\rho<\varphi+1$, there is no $\rho$-competitive algorithm for problem instance~$I$ with sufficiently large~$n\in\N$.
\end{theorem}

\begin{proof}
Suppose, for $\rho<2$, there was a $\rho$-competitive incremental solution
represented by the set of numbers $\{a_{1},\dots,a_{\ell}\}$. Without
loss of generality we can assume that $a_{i}<a_{i+1}$ for all $i\in\{1,\dots,\ell-1\}$.
Yet, \Cref{lem:necessary_and_sufficient_condition_for_competitiveness_of_set}\ref{it:numbers:1}, \ref{it:numbers:2a} and \ref{it:numbers:2b} imply that
\begin{align*}
2=z_{1}+a_{1}\leq\lfloor\rho a_{1}\rfloor=1
\end{align*}
which is a contradiction, i.e., for $\rho<2$, there is no $\rho$-competitive
incremental solution.

Next, suppose that for $\rho\in[2,\varphi+1)$ there was a $\rho$-competitive
incremental solution.
Let the number of disjoint sets $n\in\N$ in the instance be sufficiently large, and
let $\{a_{1},\dots,a_{\ell}\}$ be the set of numbers representing a $\rho$-competitive incremental solution.
Without loss of generality, we can assume that $a_{i+1}>a_{i}$ for
all $i\in\{1,\dots,\ell-1\}$. By \Cref{lem:necessary_and_sufficient_condition_for_competitiveness_of_set}
and \Cref{prop:small_rho_bounded_number-size}, we know that the following conditions are satisfied:
\begin{enumerate}[label=\textit{(\roman*)}, ref=\textit{(\roman*)}]
\item \label{it:condition:1} $a_{1}=1$,
\item \label{it:condition:2} $z_{i}+a_{i}\leq\lfloor\rho a_{i}\rfloor$ for all $i\in\{1,\dots,\ell'\}$,
\item \label{it:condition:4} $a_{i+1}\leq\lfloor\rho a_{i}\rfloor+1$ for all $i\in\{1,\dots,\ell'\}$.
\end{enumerate}
For $1\leq j \leq i\leq\ell'$, from \ref{it:condition:4} it follows that
\[
a_{j}\geq\frac{1}{\rho}\lfloor\rho a_{j}\rfloor\overset{\ref{it:condition:4}}{\geq}\frac{1}{\rho}(a_{j+1}-1)\geq\frac{1}{\rho}\biggl[\frac{1}{\rho}\biggl(a_{j+2}-1\biggr)-1\biggr]\geq\dots\geq\frac{1}{\rho^{i-j}}a_{i}-\sum_{k=1}^{i-j}\frac{1}{\rho^{k}}.
\]
This implies
\begin{eqnarray}
z_{i} & = & \sum_{j=1}^{i}a_{j}\nonumber \\
 & \geq & \sum_{j=1}^{i}\Biggl(\frac{1}{\rho^{i-j}}a_{i}-\sum_{k=1}^{i-j}\frac{1}{\rho^{k}}\Biggr)\nonumber \\
 & = & \Biggl(\sum_{j=0}^{i-1}\frac{1}{\rho^{j}}\Biggr)a_{i}-\sum_{j=1}^{i}\sum_{k=1}^{i-j}\frac{1}{\rho^{k}}\nonumber \\
 & = & \frac{1-\rho^{-i}}{1-\rho^{-1}}a_{i}-\sum_{j=1}^{i}\biggl(\frac{1-\rho^{j-i-1}}{1-\rho^{-1}}-1\biggr)\nonumber \\
 & \geq & \frac{1-\rho^{-i}}{1-\rho^{-1}}a_{i}-i\frac{1}{1-\rho^{-1}}\nonumber \\
 & = & \frac{1}{1-\rho^{-1}}\bigl((1-\rho^{-i})a_{i}-i\bigr).\label{eq:sum-of-numbers_estimate}
\end{eqnarray}
For $i\in\{2,\dots,\ell'\}$, we obtain
\begin{eqnarray}
\rho & \geq & \frac{1}{a_{i}}\lfloor\rho a_{i}\rfloor\nonumber \\
 & \overset{\ref{it:condition:2}}{\geq} & \frac{1}{a_{i}}(z_{i}+a_{i})\nonumber \\
 & \overset{\eqref{eq:sum-of-numbers_estimate}}{\geq} & \frac{1}{a_{i}}\cdot\frac{1}{1-\rho^{-1}}\bigl((1-\rho^{-i})a_{i}-i\bigr)+1\nonumber \\
 & = & \frac{1}{1-\rho^{-1}}\biggl(1-\rho^{-i}-\frac{i}{a_{i}}\biggr)+1.\label{eq:beginning_contradiction}
\end{eqnarray}
Observe that $a_{j+1}>a_j$ for all $j\in\{1,...,\ell-1\}$ implies $a_j\geq j$ for all $j\in\{1,...,\ell\}$. It follows that
\begin{eqnarray*}
a_{i} & \geq & \frac{1}{\rho-1}\bigl(\lfloor\rho a_{i}\rfloor-a_{i}\bigr)\\
 & \overset{\ref{it:condition:2}}{\geq} & \frac{1}{\rho-1}z_{i}\\
 & \overset{a_{j}\geq j}{\geq} & \frac{1}{\rho-1}\cdot\frac{i(i+1)}{2},
\end{eqnarray*}
which implies that
\begin{equation}
\frac{i}{a_{i}}\leq\frac{2(\rho-1)}{i+1}.\label{eq:estimate_i:a_i}
\end{equation}
By definition of $\ell'$ and by \Cref{prop:small_rho_bounded_number-size},
$\ell'$ increases when $n$ is increased sufficiently. Thus, for every
$\varepsilon>0$, there exists some $n\in\N$ such that
\begin{equation}
\frac{\ell'}{a_{\ell'}}\overset{\eqref{eq:estimate_i:a_i}}{\leq}\frac{2(\rho-1)}{\ell'+1}\leq\frac{\varepsilon}{2}\label{eq:epsilon_estimate_1}
\end{equation}
and
\begin{equation}
\rho^{-\ell'}\leq\frac{\varepsilon}{2}.\label{eq:epsilon_estimate_2}
\end{equation}
Since we chose $n$ sufficiently large, we can assume that this holds.
By choosing $\varepsilon=1-\frac{\rho-1}{\rho}\varphi$ we see that $\varepsilon > 0$ since $\rho<\varphi+1$, and
we obtain
\begin{eqnarray*}
\rho & \overset{\eqref{eq:beginning_contradiction}}{\geq} & \frac{1}{1-\rho^{-1}}\biggl(1-\rho^{-\ell'}-\frac{\ell'}{a_{\ell'}}\biggr)+1\\
 & \geq & \frac{1}{1-\rho^{-1}}(1-\varepsilon)+1\\
 & = & \frac{\rho}{\rho-1}\bigl(1-\bigl(1-\frac{\rho-1}{\rho}\varphi\bigr)\bigr)+1\\
 & = & \varphi+1,
\end{eqnarray*}
where the second inequality uses \eqref{eq:epsilon_estimate_1} and \eqref{eq:epsilon_estimate_2}. This yields a contradiction to the fact that $\rho<\varphi+1$. Thus,
there is no $\rho$-competitive algorithm for $\rho<\varphi+1$.
\end{proof}

This result immediately yields the desired lower bound.
\begin{corollary}
For monotone, $1$-bounded, and fractionally subadditive objectives,
the knapsack problem~(\ref{eq:generic-opt}) does not admit a $\rho$-competitive
incremental solution for $\rho<\varphi+1$.
\end{corollary}

It is possible to define a problem instance of the potential-based
maximum flow problem on parallel edges which reflects the construction
above. Thus, the lower bound on the competitive ratio translates also to this special case.

\begin{corollary}
	The incremental maximum potential-based flow problem on parallel edges does not admit a $\rho$-competitive algorithm for $\rho<\varphi+1$.
\end{corollary}
\begin{proof}
	For $i=1, \dots, n$ define $E_i$ to be a set of $i$ parallel edges
	from $s$ to $t$ with unit capacities. For $e_i \in E_i$, define its
	resistance to be $\beta(e_i):=\varepsilon^{i}$ for some
	$0<\varepsilon<1$. Let the potential loss function $\psi$ be continuous
	and strictly increasing with $\psi(0)=0$. Let $p_i:= \varepsilon^i
	\psi(1)$ be the potential difference between~$s$ and~$t$ inducing a flow
	of $1$ on all edges $e \in E_i$.
	Then, the maximum potential-based flow
	on a subset	$S \subseteq E = \bigcup_{i=1}^{n} E_i$ is given by
	\begin{align*}
		f'(S)
		& =  \max \Biggl\{\sum_{e\in T}\psi^{-1}\!\biggl(\frac{p}{\beta(e)}\biggr)\,\Bigg\vert\, T \subseteq S, p \in \mathbb{R}_{\geq 0} \:\text{with}\: \psi^{-1}\!\biggl(\frac{p}{\beta(e)}\biggr)\!\!\leq\! u(e) \:\text{for all}\: e\in T\Biggr\} \\
		& =  \max \Biggl\{ \sum_{j=1}^{i} \sum_{e\in E_j \cap S} \psi^{-1} \left(\frac{p_i}{\beta(e_j)} \right) \; \Bigg\vert\; i \in \{1,\dots,n\} \Biggr\} \\
		&= \max_{i \in \{1,\dots,n\}} |S \cap E_i|+ \sum_{j=1}^{i-1} \psi^{-1}(\varepsilon^{i-j} \psi(1)) |S \cap E_j|.
	\end{align*}
	The weights that represent the construction cost of the edges are
	defined as in the problem instance above to be $w(e_i)=b+i!$ for
	$b=(n+2)!$.

	Assume there is a $\rho$-competitive algorithm with $\rho<\varphi+1$
	for this problem. Let $\varepsilon'>0$ with
	$\rho+\varepsilon'<\varphi+1$. By \Cref{thm:lower_bound_phi+1} there is
	an $n\in \N$ such that the incremental knapsack problem given above
	does not admit a $(\rho+\varepsilon')$-competitive solution for the
	objective function $f$ defined in
	\eqref{eq:definition_of_lower_bound_objective_function}.
	Choose $\varepsilon$ small enough such that
	$\rho n^2\psi^{-1}(\varepsilon\psi(1))<\varepsilon'$. This implies
	\begin{align*}
		f'(S)-f(S) = \sum_{j=1}^{i-1} \psi^{-1}(\varepsilon^{i-j} \psi(1))
		|S \cap E_j| \leq n^{2}\psi^{-1}(\varepsilon\psi(1))
		< \frac{\varepsilon'}{\rho}.
	\end{align*}
	Let $\pi$ be a $\rho$-competitive solution for
	$f'$ produced by the algorithm. Then, for $C \geq b+1$ we have
	\begin{align*}
		(\rho+\varepsilon')f(\pi(C))
		&> \rho \left(f'(\pi(C))-\frac{\varepsilon'}{\rho}\right)+ \varepsilon'
		\geq (f')^{*}(\pi(C))
		\geq f^{*}(\pi(C)),
	\end{align*}
	where in the first inequality we use $f(\pi(C))\geq1$ and $f'(S)-f(S)<\varepsilon'/\rho$, in the second we use $\rho$-competitiveness of $\pi$ w.r.t. $f'$ and in the third inequality we use the fact that $f(S) \leq f'(S)$.
	Thus,~$\pi$ would be a $(\rho+\varepsilon')$-competitive
	solution for the incremental knapsack problem, contradicting our
	assumption. Therefore, a $\rho$-competitive algorithm for the incremental
	maximum potential-based flow problem cannot exist for
	$\rho<\varphi+1$.
\end{proof}

\section{Application to flows\label{sec:flows}}

In this section we show that our algorithm $\algscale$ can be used
to solve problems as given in Example~\ref{exa:supply}.
Formally, for the \emph{incremental maximal potential-based flow problem} on parallel edges, we 
are given a graph $G=(V,E)$ consisting of two nodes $s$ and $t$
with a collection of edges between them, and want to determine an
order in which to build the edges while maintaining a potential-based
flow between~$s$ and~$t$ that is as large as possible.
To this end, we are given a continuous and strictly increasing potential-loss function $\psi : \mathbb{R} \to \mathbb{R}$ with $\lim_{x \to \infty} \psi(x) = \infty$. Every edge~$e$ has an edge resistance $\beta(e) \in \mathbb{R}_{>0}$ and a capacity $u(e)$. Vertex potentials $p_s, p_t \in \mathbb{R}$ induce a flow of $x_e = \psi^{-1}(p/ \beta(e))$ on edge~$e$ where $p = p_t - p_s$. This flow is only feasible if $x_e \leq u(e)$.
The goal is to choose vertex potentials $p_s,p_t \in \mathbb{R}$ together with a subset of active edges that maximizes the total induced flow. This yields the objective
\begin{align}
\label{eq:potential-flow-objective}
f(S)=\max \Biggl\{\sum_{e\in T}\psi^{-1}\biggl(\frac{p}{\beta(e)}\biggr)\,\Bigg\vert\, T \subseteq S, p \in \mathbb{R}_{\geq 0} \:\text{with}\: \psi^{-1}\biggl(\frac{p}{\beta(e)}\biggr)\!\!\leq\! u(e) \:\text{for all}\: e\in T\Biggr\}
\end{align}
for all $S \subseteq E$.
The function $f$ is obviously monotone. We
further obtain that $f$ scaled by $(\min_{e \in E} u(e))^{-1}$ is
$M$-bounded for $M: = \frac{\max_{e \in E} u(e)}{\min_{e \in E} u(e)}$ because $f(\{e\})= u(e)$.
We proceed to show that the objective is fractionally
subadditive.
\begin{proposition}
\label{prop:pot-based_max-flow_is_fractionally-subadditive}The function $f : 2^E \to \mathbb{R}_{\geq 0}$ defined in \eqref{eq:potential-flow-objective} is fractionally subadditive.
\end{proposition}

\begin{proof}
For $e\in E$, let
\[
p_{e}:=\beta(e)\psi(u(e))
\]
be the maximum potential difference between $s$ and $t$ such that
the flow along $e$ induced by the potential difference $p_{e}$ is
still feasible, i.e., does not violate the capacity constraint $u(e)$.
For $e,e'\in E$, we define $x_{e}(p_{e'})$ to be the flow value
along $e$ induced by a potential difference of $p_{e'}$ between~$s$
and~$t$ if this flow is feasible and $0$ otherwise. For $S\subseteq E$,
we have
\begin{align*}
f(S) & =  \max \Biggl\{\sum_{e\in T}\psi^{-1}\biggl(\frac{p}{\beta(e)}\biggr)\,\Bigg\vert\, T \subseteq S, p \in \mathbb{R}_{\geq 0} \:\text{with}\: \psi^{-1}\biggl(\frac{p}{\beta(e)}\biggr)\!\!\leq\! u(e) \:\text{for all}\: e\in T\Biggr\} \\
 & =  \max \Biggl\{ \sum_{e\in S}x_{e}(p_{e'}) \; \Bigg\vert\; e'\in E \Biggr\},
\end{align*}
i.e., $f$ is an XOS-function and thus fractionally subadditive (see \Cref{exa:XOS}).
\end{proof}

As a corollary, we obtain the following result.

\begin{corollary}
The incremental maximal potential-based flow problem on parallel edges admits a $\rho$-competitive solution with
\begin{align*}
\rho \in \bigl[ \max \bigl\{\varphi\!+\!1,M \bigr\}, \max \bigl\{3.293\sqrt{M}, 2M \bigr\} \bigr],
\end{align*}
 where $M = \frac{\max_{e \in E} u(e)}{\min_{e \in E} u(e)}$.
\end{corollary}

We now return to the \emph{incremental maximum flow problem} discussed in \Cref{sec:introduction}.
In this problem,
we are given a directed graph $G=(V,E)$ with two designated vertices $s,t \in V$. For $v \in \mathbb{R}_{\geq 0}$, a vector $(x_e)_{e \in E}$ is an $s$-$t$-flow of value $v$ if $x_e \leq u(e)$ for all $e \in E$ and
\begin{align*}
\sum_{e \in \delta^+(v)} x_e - \sum_{e \in \delta^-(v)} x_e =
\begin{cases}
\phantom{-}v & \text{ if } v = s,\\
-v 	& \text{ if } v = t,\\
\phantom{-}0 & \text{ otherwise},
\end{cases}
\end{align*}
where
\begin{align*}
	\delta^+(v) &= \{e \in E \mid e = (v,w) \text{ with } w \in V\},\\
	\delta^-(v) &= \{e \in E \mid e = (w,v) \text{ with } w \in V\}
\end{align*}
denote the set of outgoing edges and the set of ingoing edges of a vertex~$v$, respectively. The incremental maximum flow problem has the objective
\[
f(S)=\max \bigl\{v \;\big\vert\;\textrm{there exists an }s\textrm{-}t\textrm{-flow of value }v\textrm{ in } G_S = (V,S)\} \quad \text{ for all } S \subseteq E.
\]
It is straightforward to verify that $f$ is modular (and, hence, also fractionally subadditive) for the case that $G$ has only the two vertices $s$ and $t$ and all edges go from $s$ to $t$. We here consider the case for a general graph $G$. For this case, it is easy to see that the objective need not to be fractionally subadditive in general. In fact, for the example of \Cref{subfig:inc-flow-1}, we have
\begin{align*}
f(\{b\}) &= 0, & f(\{c\}) &= 0, & f(\{b,c\}) &= k. 	
\end{align*}
This contradicts fractional subadditivity for the choices $A =\{b,c\}$, $B_1 = \{b\}$, $B_2 = \{c\}$,\linebreak and $\alpha_1 = \alpha_2 = 1$.

We proceed to show that despite the lack of (fractional) subadditivity, this problem has a competitive solution when  $u(e) = 1$ for all $e \in E$.
To solve this problem, we describe the algorithm $\textsc{Quickest-Increment}$ that has been introduced by Kalinowski et al.~\cite{KalinowskiMatsypuraSavelsbergh/15} for a different incremental flow problem where the sum of the flow values for all integer capacities $C$ is to be maximized.  The algorithm starts
by adding the shortest path and then iteratively adds the smallest
set of edges that increase the maximum flow value by at least 1. Let
$r\in\N$ be the number of iterations until $\textsc{Quickest-Increment}$
terminates. For $i\in\{0,1,\dots,r\}$, let $\lambda_{i}$ be the size
of the set added in iteration~$i$, i.e., $\lambda_{0}$ is the length
of the shortest $s$-$t$-path, $\lambda_{1}$ the size of the set
added in iteration 1, and so on. For $k\in\{1,\dots,|E|\}$, we denote
the solution of size $k$ of the algorithm by $\algsol(k)$.

With $v_{\textrm{max}}\in\R_{\geq0}$ defined as the maximal possible
$s$-$t$-flow value in the underlying graph, for $j\in\{1,\dots,\lfloor v_{\textrm{max}}\rfloor\}$,
we denote by $c_{j}$ the minimum number of edges required to achieve
a flow value of at least $j$.
The
values $\lambda_{i}$ and $c_{j}$ are related in the following way; see Kalinowski et al.~\cite{KalinowskiMatsypuraSavelsbergh/15} (Lemma 4).
\begin{lemma}
\label{lem:MaxFlow_unit-capacity_estimate}
When $w(e) = u(e) = 1$ for all $e \in E$, we have $\lambda_{i}\leq c_{j}/(j-i)$ for all $i,j\in\N$
with $0\leq i<j\leq r$.
\end{lemma}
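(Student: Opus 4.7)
The plan is to fix $i, j \in \N$ with $0 \leq i < j \leq r$ and to exhibit, on top of the edge set $A$ selected by $\textsc{Quickest-Increment}$ before iteration $i$, an augmentation of at most $c_j/(j-i)$ edges whose addition raises the maximum $s$-$t$-flow value from $i$ to at least $i+1$. Since $\lambda_i$ is by definition the \emph{minimum} size of such an augmentation, the bound $\lambda_i \leq c_j/(j-i)$ follows immediately.

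To construct the augmenting set, let $f$ be an integral max $s$-$t$-flow of value $i$ in $(V, A)$, and let $F \subseteq \gset$ with $|F| = c_j$ be an edge set attaining max flow at least $j$. Since $F \subseteq A \cup F$, the graph $(V, A \cup F)$ admits an integral $s$-$t$-flow $g$ of value at least $j$, so $g - f$ is a nonnegative integral flow of value at least $j - i$ in the residual network of $f$ inside $(V, A \cup F)$. Applying the standard flow-decomposition theorem to this residual flow produces at least $j - i$ edge-disjoint $f$-augmenting $s$-$t$-paths $\pi_1, \ldots, \pi_\ell$, each carrying one unit of residual flow (plus, possibly, some residual cycles that we discard).

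Because the $\pi_k$ are pairwise edge-disjoint in the residual graph, and because the only residual edges outside $A$ are the forward unit-capacity copies of edges in $F \setminus A$, the sets $R_k := \pi_k \cap (F \setminus A)$ form pairwise disjoint subsets of $F \setminus A$. Hence $\sum_{k=1}^{\ell} |R_k| \leq |F \setminus A| \leq c_j$, and by averaging over the $\ell \geq j - i$ paths, some $R_{k^*}$ satisfies $|R_{k^*}| \leq c_j/(j-i)$. Adding exactly the edges of $R_{k^*}$ to $A$ turns $\pi_{k^*}$ into an honest $f$-augmenting path in $(V, A \cup R_{k^*})$, so the max flow value in this enlarged edge set is at least $i + 1$. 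The delicate point is the decomposition step: we need the residual flow of value $\geq j - i$ to be realized by at least $j - i$ \emph{edge-disjoint} unit-value augmenting paths, rather than being split in a way that could force multiple paths to share (and thereby waste the counting of) edges in $F \setminus A$; this is precisely what integrality of $g - f$ in the unit-capacity setting guarantees.
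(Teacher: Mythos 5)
Your proof is correct. For context: the paper itself gives no proof of this lemma---it is imported verbatim from Kalinowski et al.~\cite{KalinowskiMatsypuraSavelsbergh/15}---so there is no in-paper argument to compare against; your residual-flow decomposition is the standard argument for this bound and is essentially the one in the cited reference. The key steps all check out: $g-f$ is an integral flow of value at least $j-i$ in the residual graph of $f$ with $0/1$ capacities, so it decomposes into at least $j-i$ unit augmenting paths that are arc-disjoint; since each edge of $F\setminus A$ carries $f$-value $0$, it appears in the residual graph only as a single forward arc, so the sets $R_k$ are indeed pairwise disjoint, and the averaging plus the minimality in the definition of $\lambda_i$ finishes the argument. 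One step you use implicitly and could state: the maximum flow in $(V,A)$ before iteration $i$ equals $i$ exactly (not just at least $i$), which the averaging needs in order to get $\ell\geq j-i$ rather than something smaller; this holds because in the unit-capacity case each iteration of $\textsc{Quickest-Increment}$ raises the flow value by exactly one (the paper asserts this in the proof of the following theorem, and it also follows from your own decomposition argument: a minimal set raising the flow by two or more would contain a proper subset already raising it by one).
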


Using this estimate, we can find a bound on the competitive ratio
of \textsc{Quick\-est-In\-cre\-ment} for the unit weight
and unit capacity case.
\begin{theorem}
\label{thm:normal_flow_unit_cap_unit_weight}
For the incremental maximal flow problem with $w(e) = u(e) = 1$ for all $e \in E$, the algorithm $\textsc{Quickest-Increment}$ is 2-competitive.
\end{theorem}

\begin{proof}
Note that, since we consider the unit capacity case, we have $v_{\textrm{max}}=r+1$
because $\textsc{Quickest-Increment}$ increases the value of the
solution by exactly 1 in each iteration.

Consider some size $k\in\{1,\dots,|E|\}$. If $k<c_{1}$, we have $f\bigl(\optsol(k)\bigr)=0$,
i.e., every solution is competitive. If $k\geq c_{1}$, let $j:=f\bigl(\optsol(k)\bigr)$.
Note that we have $f\bigl(\optsol(c_{j})\bigr)=j=f\bigl(\optsol(k)\bigr)$ and therefore $k\geq c_{j}$.
By \Cref{lem:MaxFlow_unit-capacity_estimate}, we have
\begin{eqnarray}
\sum_{i=0}^{\lceil j/2 \rceil-1}\lambda_{i} & \leq & \sum_{i=0}^{\lceil j/2 \rceil-1}\frac{c_{j}}{j-i}\nonumber \\
 & = & c_{j}\!\!\sum_{i=0}^{\lceil j/2 \rceil-1}\frac{1}{j-i}\nonumber \\
 & \leq & c_{j}\!\!\sum_{i=0}^{\lceil j/2\rceil-1}\frac{1}{j-\bigl\lceil\frac{j}{2}\bigr\rceil+1}\nonumber \\
 & = & c_{j}\biggl\lceil\frac{j}{2}\biggr\rceil\frac{1}{\bigl\lfloor\frac{j}{2}\bigr\rfloor+1}\leq c_{j}\label{eq:MaxFlow_Quickest-Increment_step-sizes}
\end{eqnarray}
This implies $f(\algsol(k))\geq f(\algsol(c_{j}))\overset{\eqref{eq:MaxFlow_Quickest-Increment_step-sizes}}{\geq}\bigl\lceil\frac{j}{2}\bigr\rceil\geq\frac{1}{2}j=\frac{1}{2}f(\optsol(k))$.
\end{proof}

Now, we turn to the case of unit capacities and rational weights.
By rescaling the weights, we can assume that, without loss of generality,
the weights are integral. To transform an instance with integral weights
to one where all edges have unit weight, one can simply replace every
edge $e\in E$ by a path of length $w(e)$ where every edge on the
new path has unit weight. Then, \Cref{thm:normal_flow_unit_cap_unit_weight}
can be applied and we obtain the following.
\begin{corollary}
For the incremental maximal flow problem with $u(e) = 1$ and $w(e) \in \mathbb{Q}_{\geq 0}$ for all $e \in E$, the algorithm $\textsc{Quickest-Increment}$ is $2$-competitive.
\end{corollary}

If we consider capacities that are in the interval $[1,M]$, one path
can bring at most~$M$ times as much flow as every other path. Combining this
with the fact that the solution of $\textsc{Quickest-Increment}$ for the instance
with $u(e)=1$ for all $e\in E$ is $2$-competitive yields that adding the edges in
the same order is always within a factor of $2M$ of the optimum solution.
\begin{corollary}
For the incremental maximal flow problem with $u(e) \in [1,M]$, $w(e) \in \mathbb{Q}_{\geq 0}$ for all $e \in E$, the solution generated by $\textsc{Quickest-Increment}$, if it only considers capacities $u(e)=1$, is $2M$-competitive.
\end{corollary}

As it turns out, the competitiveness of $\textsc{Quickest-Increment}$ of $2$ in the unit capacity case is optimal.
\begin{theorem}
\label{thm:MaxFlow_without_potentials_lower_bound}
For the incremental maximal flow problem with $u(e) = w(e) = 1$ for all $e \in E$, there is no $\rho$-competitive algorithm with $\rho < 2$.
\end{theorem}

\begin{proof}
Consider the graph $G=(V,E)$ with
\begin{eqnarray*}
V & := & \{s,t,u_{1},u_{2},u_{3},v_{1},v_{2},v_{3}\},\\
E & := & \{(s,u_{1}),(s,v_{1}),(u_{1},u_{2}),(v_{1},v_{2}),(u_{2},u_{3}),(v_{2},v_{3}),(u_{3},t),(v_{3},t),(u_{1},v_{3})\},
\end{eqnarray*}
with unit capacities and unit weights (cf.~\Cref{fig:MaxFlow_without_potentials_lower_bound}).
Let $\pi$ be an arbitrary incremental solution that is $\rho$-competitive. If the first three elements in $\pi$ are not the elements $(s,u_{1})$, $(u_1,v_3)$, and $(v_{3},t)$ (in any order) then the solution is not competitive for $C = 3$. Thus, any competitive solution contains these three elements at the first three positions. This, however, implies that the first eight elements of $\pi$ cannot contain the elements of the upper and lower paths, i.e., we have $\{(s,u_{1}),(u_1,u_{2}),(u_2,u_{3}),(u_3,t)\} \cup \{(s,v_{1}),(v_1,v_{2}),(v_2,v_{3}),(v_3,t)\}$. This implies that $f(\pi(8)) \leq 1$. Since $f^*(8) = 2$, we obtain $\rho \geq 2$, as claimed.
\end{proof}

\begin{figure}
\begin{center}
\begin{tikzpicture}
	\begin{scope}[every node/.style={circle,thick,draw}]
		\node[node,label=left:{$s$}] (s) at (0,1) {};
		\node[node,label=below:{$v_1$}] (v1) at (2,0) {};
		\node[node,label=below:{$v_2$}] (v2) at (4,0) {};
		\node[node,label=below:{$v_3$}] (v3) at (6,0) {};
		\node[node,label=above:{$u_1$}] (u1) at (2,2) {};
		\node[node,label=above:{$u_2$}] (u2) at (4,2) {};
		\node[node,label=above:{$u_3$}] (u3) at (6,2) {};
		\node[node,label=right:{$t$}] (t) at (8,1) {};
	\end{scope}
	
	\begin{scope}[edge/.style={->,thick}]
		\draw[edge] (s) to (u1);
		\draw[edge] (u1) to (u2);
		\draw[edge] (u2) to (u3);
		\draw[edge] (u3) to (t);
		\draw[edge] (s) to (v1);
		\draw[edge] (v1) to (v2);
		\draw[edge] (v2) to (v3);
		\draw[edge] (v3) to (t);
		\draw[edge] (u1) to (v3);
	\end{scope}
\end{tikzpicture}
\end{center}

\caption{\label{fig:MaxFlow_without_potentials_lower_bound}A lower bound instance
with best possible competitive ratio 2 for the problem $\textsc{Incremental}$
$\textsc{Maximum}$ $s$-$t$-$\textsc{Flow}$}
\end{figure}

Furthermore, similar to the incremental maximization of a fractionally
subadditive function subject to a knapsack constraint, no algorithm
can have a better competitive ratio than~$M$ when $u(e) \in [1,M]$ for all $e \in E$.
\begin{theorem}
\label{thm:MaxFlow_without_potentials_lower_bound-1}
For the incremental maximal flow problem with $u(e) \in [1,M]$ and $w(e) =1$ for all $e \in E$, there is no $\rho$-competitive algorithm with $\rho < M$. 
\end{theorem}

\begin{proof}
Consider the graph $G=(V,E)$ with
\begin{eqnarray*}
V & := & \{s,t,v\},\\
E & := & \{(s,t),(s,v),(v,t)\},
\end{eqnarray*}
with unit weights and capacities $u((s,t))=1$, $u((s,v))=u((v,t)) = M$
(cf. Figure~\ref{subfig:inc-flow-1}).

Let $\pi$ be an arbitrary incremental solution. If $\pi$ does not begin with element $(s,t)$, then it is not competitive for $C=1$. This, however, implies that for any competitive incremental solution $\pi$, we have $\pi(2) \neq \{(s,v),(v,t)\}$. Thus, for any competitive $\pi$, we have $f(\pi(2)) = 1$ while $f^*(2) = M$. This implies $\rho \geq M$, as claimed.
\end{proof}

\bibliographystyle{plain}
\bibliography{../bibliography}

\end{document}